\newcommand*{\rom}[1]{\expandafter\@slowromancap\romannumeral #1@}
\renewcommand{\min}{\operatorname{min}}
\theoremstyle{plain}
\newtheorem{theorem}{Theorem}[section]
\newtheorem{lemma}[theorem]{Lemma}
\theoremstyle{remark}
\newtheorem*{remark}{Remark}
\begin{document}

\title[Brownian motion on the Spider like Quantum graphs]{Brownian motion on the Spider like Quantum graphs}

\author[Stanislav molchanov]{Stanislav Molchanov}

\address{Department of Mathematics and Statistics, University of North Carolina at Charlotte, 9201 University City Blvd., Charlotte, NC 28223, USA}

\email{smolchan@charlotte.edu}

\author[Madhumita Paul]{Madhumita Paul}

\address{Department of Mathematics and Statistics, University of North Carolina at Charlotte, 9201 University City Blvd., Charlotte, NC 28223, USA}

\email{mapaul3@charlotte.edu}


\thanks{2020\textit{Mathematics Subject Classification:}Primary 60J65; secondary 47A11 \\\textit{Keywords and Phrases: Brownian motion, Transition probability, Quantum graphs, Spider Laplacian, Spectrum, Absolute continuous spectrum.}}
\begin{abstract}
The paper contains the probabilistic analysis of the Brownian motion on the simplest quantum graph, spider: a system of $N$-half axis connected only at the graph's origin by the simplest (so-called Kirchhoff's)  gluing conditions. The limit theorems for the diffusion on such a graph, especially if $N \to \infty$ are significantly different from the classical case $N =2$ (full axis). Additional results concern the properties of the spectral measure of the spider Laplacian and the corresponding generalized Fourier transforms. The continuation of the paper will contain the study of the 
spectrum for the class of Schr\"{o}dinger operators on the spider graphs: Laplacian perturbed by unbounded potential and related phase transitions. 
\end{abstract}

\thispagestyle{empty}

\maketitle
\section{Introduction}

The quantum (or metric) graph concept is closely related to the physical, primarily, optical applications. The network of thin cylindrical channels, drilled in the optical materials (like silicon) asymptotically, if $\epsilon$ (the diameter of channels) tends to 0, forms a quantum graph. The Maxwell equation, in this case, degenerates into second order scalar equation with appropriate gluing conditions in the branching points of the network (see details in~\cite{berkolaiko2013introduction, molchanov2007scattering}).\\

 Corresponding Hamiltonian generates the continuous diffusion process on the quantum graph. Locally, (outside the branching point) this process is the usual $1-D$ Brownian motion. The new effects appear only in the neighborhood of the branching point. The graph in this case has the spider structure: the single point, the origin or the head of the spider, and finitely many legs.\\
\subsection{Spider graph}
 The spider graph $Sp_N$ with $N$ legs $l_i$ for $ i = 1, 2,\dots,  N$ is the group of $N$ half-axis, connected at the single point $0$ (origin of the graph). Along each leg we can then introduce the Euclidean coordinates $x_i \geq 0$, $i = 1, 2,\dots, N$ (See figure \ref{fig 1}). The metric $d(x_i,y_i)$ on $Sp_N$ has the form,

\begin{align*}
d\left(x_{i}, y_{i}\right)=|x_{i}-y_{i}|, && \text { for $ x_{i}, y_{i} \in l_{i}$} \\
d\left(x_{i}, y_{j}\right)=|x_{i}|+|y_{j}| && \text { for $ |x_{i}|=|x_{i}-0|$, $x_{i} \in l_{i}$, $y_{j} \in l_{j}$ for $ i \neq j$}
\end{align*}

 The Lebesgue measure on $S p_{N}$ also has an obvious meaning. The simplest functional spaces such as, $\mathbb{C}(S p_{N})$ (space of the bounded continuous functions with the norm $||f||_{\infty}=\sup _{x \in S p_{N}}|f(x)|$ or,
\begin{align*}
\mathbb{L}^{2}(S p_{N})=\left\{f(x): \int_{S p_{N}} f^{2}(x) d x=||f||_{2}^{2}=\sum_{i=1}^{N} \int_{0}^{\infty}(f^{2}(x_{i}) d x_{i})\right\}
\end{align*}

have the usual properties (completeness etc.).
The new definition requires only the space of the smooth functions and the differential operators (first of all, the Laplacian (see [1]). For the compactly supported smooth $(\mathbb{C}^{2}$-class) functions $f(x)$, whose support does not contain the origin, Laplacian is simply the second derivative with respect to local coordinates $x_{i}$ on the legs $l_{i}, i=1,2, \dots, N$.

 \begin{figure}[h]
\centering
{\includegraphics[scale=.8]{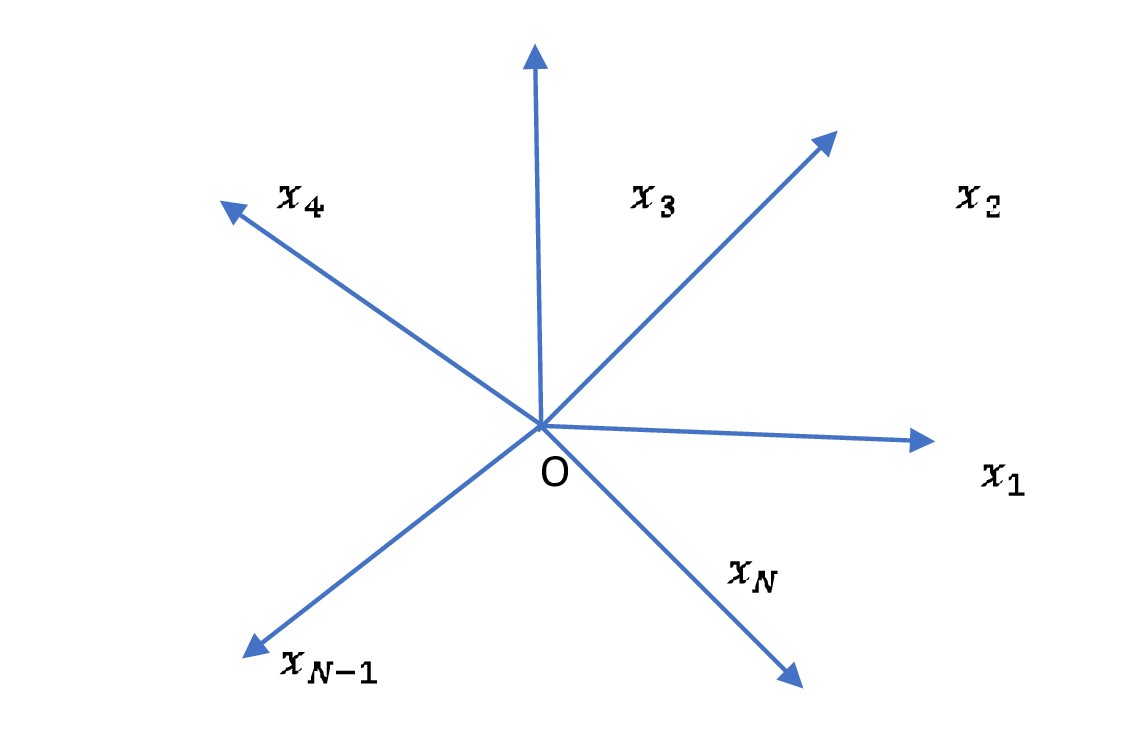}}
\caption{Spider graph with N legs}
 \label{fig 1}
\end{figure}

\begin{figure}[h]
\centering
{\includegraphics[scale=.8]{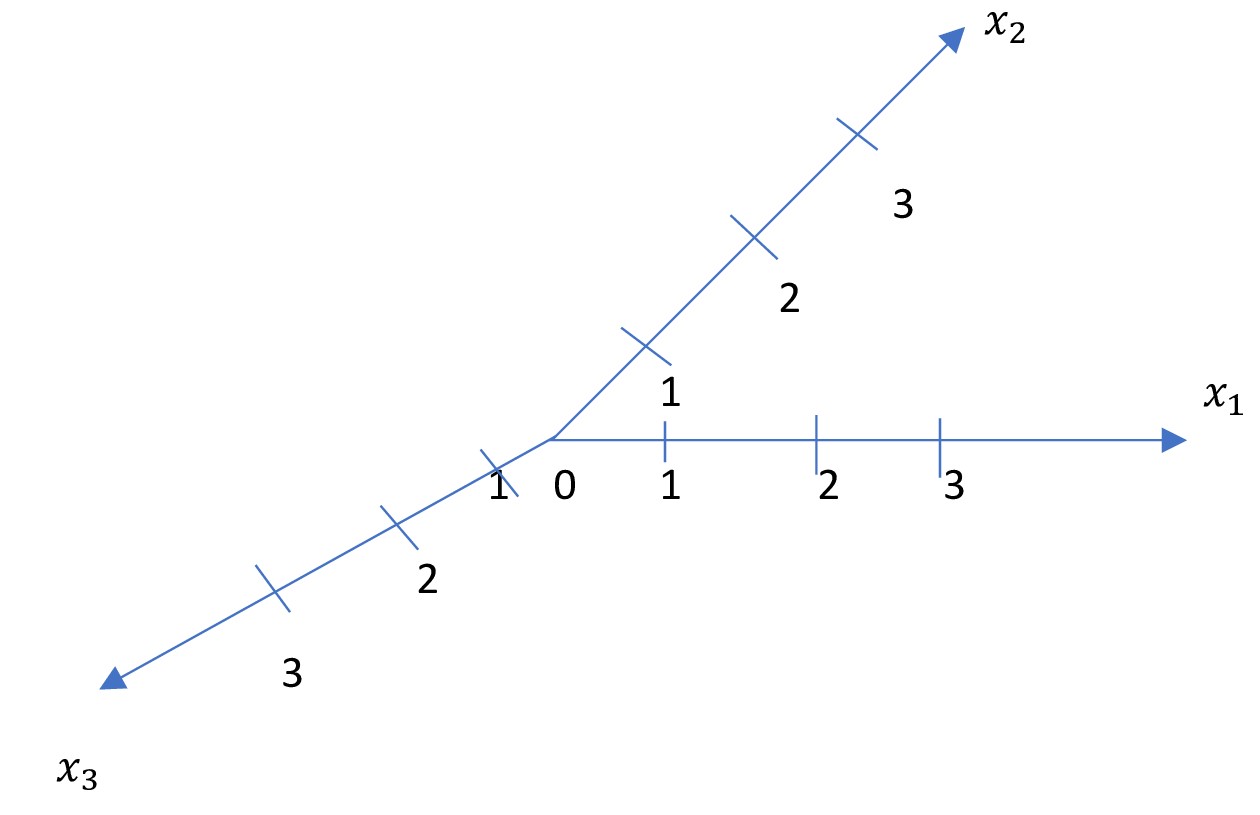}}
\caption{Lattice spider with 3 legs}
 \label{fig 2}
\end{figure}

At the origin (branching point) we will introduce Kirchhoff's gluing conditions for $f$ from the domain of the definition of Laplacian:
\begin{align}
\text{Continuity at $x=0$, i.e $f(x) \in \mathbb{C}(S p_{N})$. In particular $f(0)=\lim _{x_{i} \rightarrow 0} f(x_{i}),  i=1,2 \dots, N$ } \label{a}\\
\text{ For each $i=1,2, \dots, N$, there exists, $\frac{d f}{d x_{i}}(0)=f_{i}^{\prime}$ and $\sum_{i=1}^{N} f_{i}^{\prime}=0$}\label{b}
\end{align}
Now we can define the Sobolev space $H_{2,1}$ on $Sp_{N}$ ($f \in H_{2,1}$ if $\int_{S p_{N}}(f^{\prime})^{2}(x) d x=||f||_{2,1}^{2}$) and try to construct the Brownian motion $b(t), t \geq 0$ with generator $\Delta$ or the Dirichlet quadratic form:
\begin{align*}
\left(f^{\prime}, g^{\prime}\right)=\int_{S p_{N}} f^{\prime}(x) g^{\prime}(x) d x
\end{align*}

\subsection{Random walk}
 Let's try to clarify the meaning of the Brownian motion and Kirchhoff's gluing condition. Roughly, on each leg it is the usual $1-D$ Wiener process until the first exit to the origin. In the corresponding Markov moment $\tau_{0}$, the process is selected with probability $\frac{1}{N}$ on one of the legs and starts to move along this leg.
 The justification of this rough statement can be based on the discretization of space and time. We will start from the standard unit scales on half-lattices $\mathbb{Z}_{+}^{d}$ and time $t=0, \pm 1, \dots$\\

 Consider the random walk $x(t)$ on the lattice spider $\mathbb{Z}\left(S p_{N}\right)$ : (see figure \ref{fig 2}, $N=3$ ) such that for any $x_{i} \neq 0$, $i=1,2, \dots, N$
\begin{align*}
&&Q\{x(n+1)=x_{i} \pm 1 / x(n)=x_{i}\}=\frac{1}{2}\\
and&& Q\{x(n+1)=1_{i} / x(n)=0\}=\frac{1}{N}
\end{align*}
Let us introduce the important random variable (r.v):
\begin{align*}
\tau_{0}=\min \left\{n: x(n)=0 / x(0)=x \in leg_{i}\right\}
\end{align*}
Note that, the random variable $\tau_{0}$ is finite $P$-a.s (due to recurrence of $1-D$ symmetric random walk). The generating function $\psi_{i}(x, z)=E_{x_{i}}z^{\tau_{0}}, i=1,2, \dots, N$ is independent on the number $i$ of the leg and
\begin{align}
\psi_{i}(x, z)=\left(\frac{1-\sqrt{1-z^{2}}}{z}\right)^{x_{i}}
\end{align}
One can prove now that,
\begin{align*}
Ee^{-\lambda \Theta_{a}}= e^{-\sqrt{2\lambda}a} \qquad \text{where $\Theta_{a}=\lim\limits_{n \to \infty} \frac{\tau_0(a)}{n^2}$ (in law)}
\end{align*}
(here $x=[na]$ is the initial point of the random walk)(see~\cite{feller1991introduction}).\\

  Let us also introduce the transition probabilities of the random walk on $\mathbb{Z}\left(S p_{N}\right)$,
\begin{align*}
\text { a) } Q\left(n, 0, x_{i}\right)=Q_{0}\left\{x(t)= x_{i} \in l_{i} / x(0)=0\right\}=\frac{2}{N} \tilde{Q}(n, 0, x)
\end{align*}
where, $\tilde{Q}(n, 0, x)=\left(\begin{array}{l}n \\ |x|\end{array}\right) \left(\frac{1}{2}\right)^{n}$ is the transition probability of the symmetric random walk on $\mathbb{Z}$
\begin{align*}
\text { b) }Q\left(n, x_{i}, y_{j}\right)=\sum_{k} Q_{x_{i}}\left\{\tau_{0}=k\right\} Q\left(n-k, 0, y_{j}\right)
\end{align*}
In this formula $x_{i}, y_{j}$ belong to different legs and $Q_{x_i}(\centerdot)$ denotes the probability measure associated with the random walk $x(t)$, $t \geq 0$, $x(0)=x_i$
\begin{align*}
\text { c) } Q\left(n, x_{i}, y_{i}\right)=Q_{-}\left(n, x_{i}, y_{i}\right)+\sum_{k=1}^{n} Q\left\{\tau_{0}=k\right\} Q\left(n-k, 0, y_{i}\right)
\end{align*}
Here, $x_i$, $y_i$ belong to $l_{i}$ and $Q_{-}\left(n, x_{i}, y_{i}\right)=Q\left(n, x_{i}, y_{i}\right)-Q\left(n, x_{i},-y_{i}\right)$ is the transition probability of the random walk on $\mathbb{Z}_{+}^{1}$ with annihilation at $x=0$. Also, $Q_{+}= Q(n, x_i, -y_i) + Q(n, x_i, y_0)$ where $x_i$, $y_i \in l_i $\\

 \section{Transition probabilities for the Brownian motion on $Sp_N$}

Using the CLT and the limit theorem for the r.v. $\tau_{0}$ and passing to the limit after standard space-time normalization one can find formulas for transition probabilities of the Brownian motion $b(t)$ of $S p_{N}$. It is not difficult to check these formulas (after simplified transformations) directly without the discussion of discritization~\cite{paul2023spectral}.
\begin{align}
p\left(t, 0, y_{j}\right) & =\frac{1}{N} p^{+}\left(t, 0, y_{j}\right)=\frac{2}{N} \frac{e^{-\frac{y^{2}}{2 t}}}{\sqrt{2 \pi t}} \label{4}\\
p\left(t, x_{i}, y_{j}\right) &= 
\int_{0}^{t} \frac{x_{i}}{\sqrt{2 \pi s^{3}}} e^{-\frac{x_{i}^{2}}{2 s}} \times \frac{2}{N} \frac{e^{-\frac{y_{j}^{2}}{2(t-s)}}}{\sqrt{2 \pi(t-s)}} d s &&\text{$x_i$, $y_j$ are on different legs} \label{5}
\end{align}
Similarly,
\begin{align}
\nonumber
p\left(t, x_{i}, y_{i}\right) & =p_{-}\left(t, x_{i}, y_{i}\right)+p_{+}\left(t, x_{i}, y_{i}\right) \\ \nonumber
& =p_{-}\left(t, x_{i}, y_{i}\right)+\int_{0}^{t} \frac{x_{i}}{\sqrt{2 \pi s^{3}}} e^{-\frac{x_{i}^{2}}{2 s}} \times \frac{2}{N} \frac{e^{-\frac{y_{i}^{2}}{2(t-s)}}}{\sqrt{2 \pi(t-s)}} d s \\
& =\frac{1}{\sqrt{2 \pi t}}\left[e^{-\frac{\left(x_{i}-y_{i}\right)^{2}}{2 t}}-e^{-\frac{\left(x_{i}+y_{i}\right)^{2}}{2 t}}\right]+\frac{2}{N} \int_{0}^{t} \frac{x_{i}}{\sqrt{2 \pi s^{3}}} e^{-\frac{x_{i}^{2}}{2 s}} \times \frac{e^{-\frac{y_{i}^{2}}{2(t-s)}}}{\sqrt{2 \pi(t-s)}} d s \label{6}
\end{align}
Here $x_i$ and $y_i$ are two points on the $i^{th}$ leg. $p_{+}$, $p_{-}$ are transition densities for the Brownian motion on $[0, \infty)$ with Dirichlet or Neumann boundary condition.\\
\begin{remark}
Sign $+$ corresponds to the first visit of point 0.
Sign $-$ corresponds to $\{\tau_0 > t \}$
\end{remark}

 Our goal now is to prove several limit theorems describing the space fluctuations of the process $b(t)$ on $S p_{N}$.\\
The formulas \eqref{5}, \eqref{6} can be simplified. Consider spider graph $Sp(2)$ for $N=2$, it is the usual real line $\mathbb{R}$ with point $x=0$ (origin) which divides the real line into two half axis.\\

 Let $x>0$, $-y>0$ (i.e. $y<0$) are two points and on $\mathbb{R}$ and $q(t,x,y) = \frac{e^{-\frac{(x-y)^2}{2t}}}{\sqrt{2 \pi t}}$ is the standard transition density for $1$-D Brownian motion $b(t)$. Then the same calculations as above give (due to the strong Markov property of $b(t)$ at the moment $\tau_0=\min(s: b(s) =0 / b(0)=x $))

\begin{align}
\frac{e^{-\frac{(x+y)^2}{2t}}}{\sqrt{2 \pi t}}=  \int_{0}^{t} \frac{x}{\sqrt{2 \pi s^{3}}} e^{-\frac{x^{2}}{2 s}} \times \frac{e^{-\frac{y^{2}}{2(t-s)}}}{\sqrt{2 \pi(t-s)}} \label{7}
\end{align}

Using \eqref{7} we will get , for $i \neq j$
\begin{align}
p(t,x_i,y_j) = \frac{e^{-\frac{(x_i+y_j)^2}{2t}}}{\sqrt{2 \pi t}}\centerdot \frac{2}{N} &&  \text{(instead of \eqref{5})} \label{7*}
\end{align}
 and (using \eqref{4}) for the $leg_i$
\begin{align}
\nonumber 
p\left(t, x_{i}, y_{i}\right) &= \frac{1}{\sqrt{2 \pi t}}\left[e^{-\frac{\left(x_{i}-y_{i}\right)^{2}}{2 t}}-e^{-\frac{\left(x_{i}+y_{i}\right)^{2}}{2 t}}\right]+\frac{2}{N}  \frac{e^{-\frac{(x_i+y_i)^2}{2t}}}{\sqrt{2 \pi t}}\\
&=\frac{1}{\sqrt{2 \pi t}}\left[e^{-\frac{\left(x_{i}-y_{i}\right)^{2}}{2 t}}-(\frac{N-2}{N})e^{-\frac{\left(x_{i}+y_{i}\right)^{2}}{2 t}}\right] \label{8}
\end{align}
Simple formulas \eqref{7*}, \eqref{8} define the transition probabilities for $b(t)$ on $Sp(N)$, $N \geq 2$. \\

 In the theory of diffusion process we can work either with the semigroup $P_t$, given by transition probabilities: 
\begin{equation*}
(P_t f) (x) = \int_{Sp_N} f(y) p(t,x,y) dy 
\end{equation*}
or, with the trajectories (in our case $b(t)$): 
\begin{equation*}
(P_t f) (x) = E_x f(b(t)) 
\end{equation*}
Now we will define the trajectories of $b(t)$ on the $Sp(N)$. Let $b^{1}(t), \dots, b^{N}(t)$ are independent $1$-D Brownian motions with transition probabilities. 
 
$$ q^{l}(t,x_i,y_i) = \frac{e^{-\frac{(x_i-y_i)^2}{2t}}}{\sqrt{2 \pi t}} $$
 The processes 
$b_{+}^{i}(t)$ are reflected Brownian motions on the half axis $[0, \infty)$ with transition densities (for $x_i, y_i \geq 0$)
\begin{align*}
q_{+}^{i}\left(t, x_{i}, y_{i}\right) &= \frac{1}{\sqrt{2 \pi t}}\left[e^{-\frac{\left(x_{i}-y_{i}\right)^{2}}{2 t}}+e^{-\frac{\left(x_{i}+y_{i}\right)^{2}}{2 t}}\right]
\end{align*}
Consider now the sequence of the small numbers $\delta_{l}= 2^{-l}$, $l=0,\dots $ and define the Markov moment $\tau_{l}^{0}=\min(t:b_{+}^{i}(t)=\delta_l)$, on one of the half-axes $l_i=1,2, \dots, N$. It is easy to see, 
\begin{align*}E_{0}e^{-\lambda \tau_{l}^{0}}= \frac{1}{\cosh \sqrt{2\lambda} \cdot \delta_l}, && E_0 \tau_{l}^{0}=\delta_{l}^{2}\end{align*}
and the random variable $\tau_{l}^{0}$ on the the spider for the Brownian motion $b(t)$, starting from 0 has the same law as the 
$\tau_{i,l}^{0}= \min(t: b_{+}^{i}(t)= \delta_l/ b_{+}^{i}(0)=0)$. Here $b_{+}^{i}(t)$ is the Brownian motion on $l_i=[0,\infty)$ with reflection boundary condition at 0.

 Let us describe now the semi Markov process $x_{l}(t)$ on $Sp(N)$ for fixed small parameters $\delta_{l}=2^{-l}$, for $l=1,2,\dots, $. This process starts from 0 and stays at 0 until moment $\tau_{l,1}$ (this is the only deviation of $x_l(t)$ from the Markov property, since the random variable $\tau_{l,1}$ is not exponentially distributed). At the moment $\tau_{l,1}$ it jumps with probability $\frac{1}{N}$ at one of the points $\delta_{l} \in l_1$ and completes the process $b_{+}^{i}$ until it returns to 0 (corresponding random variable has the Laplace transform $E_{\delta_{l}}e^{-\lambda \Theta_{l,1}}=exp(-\sqrt{2\lambda}\delta_l)$. Then it stays at 0 and after random time $\tau_{l,2}$, jumps at one of the point $\delta_{l}=2^{-l} \in l_{j}$ with probability $\frac{1}{N}$ etc.\\

 The process $x_{l}(t)$ contains the central point: the motion from the origin on one of the legs with probability $\frac{1}{N}$. One can realize the processes $x_l(t)$ on one probability space. Let us outline the simple element of this construction. Consider process $x_l(t)$, $l \geq 2$ and construct $x_{l-1}(t)$ in terms of $x_{l}(t)$. At the moment $\tau_{l,1}$ the process $x_{l}(t)$ jumps from 0 to one of the points $2^{-l}$ on $l_i$, $i=1,2, \dots, N$ after it moves on $[0, 2^{-l+1}]$ along the process $b_{+}^{i}(t)$. The first exit from $[0, 2^{-l+1}]$ can be either at the point $2^{-l+1}$ and we say in this case that, $x_{l-1}(\centerdot)= \delta_{l-1}$. If at this moment $x_{l}(t)=0$ then we stay at 0 at random time with the law $\tau_{l, \centerdot}$ then jump to one of the points $\delta_{l,j}$ with probability $\frac{1}{N}$ and repeat this procedure until we will enter to $\delta_{l-1,j}$. It is easy to check that this random moment has Laplace transform $e^{-\sqrt{2\lambda}\delta_{l-1}}$, $\delta_{l-1}=2^{-(l-1)}$, i.e we realized process $x_{l-1}(t)$ in terms of $x_l(t)$. Passing to the limit we can define the trajectories of the Brownian motion on $Sp_N$. Our goal is to prove several limit theorems for $b(t)$ on the spider graph. 
\subsection{First exit time from the ball $Sp(N,L)$} 
Let $L$ is the large parameter and $Sp(N,L)$ is the spider with the center $0$ and $N$ legs of the length $L$.\\

  Put, 
  \begin{align*}
&&\partial_{L}=\{ x_i =L, i=1,2, \dots, N \} = \text{boundary of $Sp(N,L)$}\\
\text{and}&&\tau_{L}=\min \left(t: b(t) \in \partial_{L}\right)
\end{align*}
\begin{theorem}
If $\tau_{L}=\min \left(t: b(t) \in \partial_{L}\right) $ then, 
\begin{align}&& E_{x} e^{-\lambda \tau_{L}}&=\frac{\cosh \sqrt{2 \lambda} x_{i}}{\cosh \sqrt{2 \lambda} L}, & i=1,2, \dots, N \label{9}\end{align}
\begin{align}\text{in particular,} && E_{0} e^{-\lambda \tau_{L}}&=\frac{1}{\cosh \sqrt{2 \lambda} L}, & i=1,2, \dots, N \label{10}\end{align}
\end{theorem}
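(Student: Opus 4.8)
The plan is to identify $u(x) := E_x e^{-\lambda \tau_L}$ with the solution of a boundary value problem for the spider Laplacian and then solve the resulting ODE leg by leg. The natural starting point is the Dynkin/Feynman--Kac martingale argument. Since the generator of $b(t)$ is $\tfrac12\Delta$ (the transition density in \eqref{4} is that of the standard Brownian motion), the process $M_t = e^{-\lambda (t\wedge \tau_L)} u(b(t\wedge \tau_L))$ is a martingale precisely when $\tfrac12 u'' = \lambda u$ holds in the interior of each leg. Because $0\le u\le 1$ and $\tau_L < \infty$ with $E_x\tau_L < \infty$ (the process lives on a compact graph with absorbing boundary $\partial_L$), the martingale $M_t$ is bounded, so optional stopping gives $u(x) = E_x M_{\tau_L} = E_x e^{-\lambda \tau_L}$, confirming the representation. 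The function $u$ inherits from the domain of the spider Laplacian the two gluing conditions \eqref{a}, \eqref{b} at the origin, namely continuity of $u$ at $0$ and $\sum_{i=1}^N u_i'(0) = 0$; on the boundary it satisfies the Dirichlet condition $u\equiv 1$ on $\partial_L$, since $\tau_L = 0$ whenever $b(0)\in\partial_L$.

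Next I would solve $u'' = 2\lambda u$ on each leg. Its general solution on $l_i$ is $u_i(x_i) = A_i\cosh(\sqrt{2\lambda}\,x_i) + B_i\sinh(\sqrt{2\lambda}\,x_i)$. The Dirichlet condition at $x_i = L$ reads $A_i\cosh(\sqrt{2\lambda}L) + B_i\sinh(\sqrt{2\lambda}L) = 1$, while continuity at the origin forces $u_i(0) = A_i$ to be independent of $i$; call this common value $c$. Solving the boundary relation for $B_i$ then shows $B_i = (1 - c\cosh(\sqrt{2\lambda}L))/\sinh(\sqrt{2\lambda}L)$ is the same for every leg, so the Kirchhoff sum condition $\sum_i B_i\sqrt{2\lambda} = 0$ collapses to $N B_i\sqrt{2\lambda} = 0$ and gives $B_i = 0$. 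Substituting back yields $c = 1/\cosh(\sqrt{2\lambda}L)$ and hence $u(x_i) = \cosh(\sqrt{2\lambda}\,x_i)/\cosh(\sqrt{2\lambda}L)$, which is \eqref{9}; setting $x_i = 0$ produces \eqref{10}.

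The only genuinely nontrivial point, and the place I would spend the most care, is the first step: verifying that the Laplace transform really obeys the Kirchhoff behavior at the branching point, not merely the ODE on the open legs. This is where the probabilistic content of the gluing conditions enters, since the factor $\tfrac1N$ with which an excursion leaves the origin onto each leg (the random-walk rule $Q\{x(n+1)=1_i \mid x(n)=0\} = \tfrac1N$ and its diffusive limit) is exactly what produces the symmetric Neumann-type condition $\sum_i u_i'(0) = 0$. One clean way to make this rigorous is to observe that $u$ lies in the domain of the generator (it is the $\lambda$-resolvent applied to a boundary datum) and therefore automatically satisfies \eqref{a}--\eqref{b}; alternatively one may derive $\sum_i u_i'(0) = 0$ directly from the excursion decomposition of $b(t)$ at the origin. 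Once the boundary value problem is pinned down, the computation above is forced and the symmetry among the legs emerges as a consequence rather than an assumption.
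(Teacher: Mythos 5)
Your proposal is correct and follows essentially the same route as the paper: identify $E_x e^{-\lambda\tau_L}$ with the solution of $\tfrac12 u'' = \lambda u$ on each leg subject to $u=1$ on $\partial_L$ and the Kirchhoff conditions \eqref{a}, \eqref{b} at the origin, then solve. The paper compresses this to ``elementary calculations give \eqref{9}''; you have simply supplied those calculations (the symmetry forcing $B_i=0$) together with the optional-stopping justification of the boundary value problem, which the paper leaves implicit.
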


\begin{proof}
The last formula \eqref{10} was already used in the construction of the process $x_{l}(t)$. Let us prove \eqref{9}. \\
Let us consider, \begin{align*}u_{\lambda}(x_i)= E_{x_{i}} e^{-\lambda \tau_{L}} \end{align*}
then,\begin{align*}
\frac{1}{2} \frac{d^{2}u}{d x_i^2}-\lambda u_{\lambda}=0 &&\text{$u_{\lambda}(L_i)=1$, $ i=1,2, \ldots N $ + gluing conditions at 0 given by \eqref{a}, \eqref{b}}\end{align*}
Elementary calculations give \eqref{9}.
\end{proof}




 Using the self similarity property of 1-D Brownian motion $\tau_{L}\overset{\text{law}} {=} L^{2} \tau_{1}$, we will get,
\begin{align}
\psi_{i}(0)=E_{0} e^{-\lambda \tau_{L}}=\frac{1}{\cosh \sqrt{2 \lambda} L} & =E_{0} e^{-\lambda L^{2} \tau_{1}} &&\text{i.e. $\frac{\tau_L}{L^2} = \tau_1$ (law)}
\end{align}
Now we calculate the density for $\tau_{1}$.\\

 Roots of $\cosh \sqrt{2 \lambda}$ are given by the equation,
\begin{align*}
\cosh \sqrt{2 \lambda}=0 \Rightarrow \sqrt{2 \lambda}=i\left(\frac{\pi}{2}+\pi n\right) \Rightarrow \lambda_{n}=-\frac{\pi^{2}(2 n+1)^{2}}{8} && \text{ for $n \geq 0$}
\end{align*}
It leads to the infinite product,
\begin{equation}
\cosh \sqrt{2 \lambda}=\left(1+\frac{8 \lambda}{\pi^{2}}\right) \cdot\left(1+\frac{8 \lambda}{(3 \pi)^{2}}\right) \cdot\left(1+\frac{8 \lambda}{(5 \pi)^{2}}\right) \dots\left(1+\frac{8 \lambda}{((2 n+1) \pi)^{2}}\right) \dots
\end{equation}
Let us find the expansion of Laplace transform of $\frac{1}{\cosh \sqrt{2 \lambda}}$ into simple fractions. It is known that,(~\cite{gradshteyn2014table})
\begin{align*}
\frac{1}{\cos \frac{\pi x}{2}}=\frac{4}{\pi} \sum_{k=1}^{\infty}(-1)^{k+1} \frac{2 k-1}{(2 k-1)^{2}-x^{2}}
\end{align*}
Using the substitution
\begin{align*}
\frac{\pi x}{2}=z \Rightarrow x=\frac{2 z}{\pi}
\end{align*}
and formula
\begin{align*}
\frac{1}{\cosh \sqrt{2 \lambda}}=\frac{1}{\cos i \sqrt{2 \lambda}}
\end{align*}
we will get,
\begin{align*}
\frac{1}{\cosh \sqrt{2 \lambda}} & =\sum_{k=1}^{\infty}(-1)^{k} \frac{4(2 k-1) \pi}{\pi^{2}(2 k-1)^{2}+8 \lambda} \\
& =\sum_{k=1}^{\infty}(-1)^{k+1} \frac{(2 k-1) \frac{\pi}{2}}{\lambda+\frac{\pi^{2}(2 k-1)^{2}}{8}}
\end{align*}
Now applying inverse Laplace transform, we have for the density $P_{\tau_1}(\centerdot)$ of the random variable $\tau_1$ the following fast convergent series
\begin{align}
P_{\tau_1}(s)=\sum_{k=1}^{\infty}(-1)^{k+1} \frac{(2 k-1) \pi}{2} e^{-s \frac{\pi^{2}(2 k-1)^{2}}{8}}
\end{align}
\begin{lemma}
Moments of $\tau_L$ 
are given by the following formulas.
\end{lemma}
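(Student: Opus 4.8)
The plan is to reduce every moment to the single random variable $\tau_1$ by means of the self-similarity relation $\tau_L\overset{\text{law}}{=}L^2\tau_1$ recorded above, which gives at once $E_0\tau_L^n=L^{2n}E_0\tau_1^n$ for all $n\geq 1$. It therefore suffices to compute $E_0\tau_1^n$, and I would do this by two mutually confirming routes.

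\emph{First route (Laplace transform).} Since $E_0e^{-\lambda\tau_1}=1/\cosh\sqrt{2\lambda}$ is analytic in a neighborhood of $\lambda=0$, its moments are the Taylor coefficients through $E_0e^{-\lambda\tau_1}=\sum_{n\geq 0}\frac{(-\lambda)^n}{n!}E_0\tau_1^n$. Setting $u=\sqrt{2\lambda}$ and invoking the classical expansion $\operatorname{sech}u=\sum_{n\geq 0}\frac{E_{2n}}{(2n)!}u^{2n}$ with the Euler numbers $E_{2n}$, I would substitute $u^{2n}=(2\lambda)^n$ and match powers of $\lambda$, obtaining $E_0\tau_1^n=\frac{2^n\,n!}{(2n)!}\,\lvert E_{2n}\rvert$; the absolute value is forced by the alternation $(-1)^nE_{2n}>0$, which is consistent with the positivity of a moment. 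In particular $E_0\tau_1=1$, hence $E_0\tau_L=L^2$, matching the relation $E_0\tau_l^0=\delta_l^2$ used above in the construction of $x_l(t)$.

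\emph{Second route (density series).} Alternatively I would integrate the fast convergent series for $P_{\tau_1}(s)$ term by term against $s^n$, using $\int_0^\infty s^n e^{-as}\,ds=n!/a^{n+1}$ with $a=\pi^2(2k-1)^2/8$. This yields $E_0\tau_1^n=\frac{8^{n+1}\,n!}{2\pi^{2n+1}}\sum_{k\geq 1}\frac{(-1)^{k+1}}{(2k-1)^{2n+1}}=\frac{8^{n+1}\,n!}{2\pi^{2n+1}}\beta(2n+1)$, where $\beta$ is the Dirichlet beta function. Inserting the closed form $\beta(2n+1)=\frac{(-1)^nE_{2n}\pi^{2n+1}}{4^{n+1}(2n)!}$ reproduces the answer of the first route, giving an internal consistency check.

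The main obstacle is the justification of the term-by-term integration in the second route (and, symmetrically, the coefficient extraction at $\lambda=0$ in the first). One must verify that the alternating series for $P_{\tau_1}$ may be integrated against the polynomial weight $s^n$ on $[0,\infty)$; this follows because the Gaussian-type factors $e^{-s\pi^2(2k-1)^2/8}$ dominate the linear-in-$k$ prefactors and produce a summable, dominated family, so that Fubini or monotone convergence applies. Once the exchange is legitimate, the remaining work is purely the recognition of the resulting number-theoretic series, so the analytic difficulty is slight. Finally, for a general starting point $x_i$ the identical scaling gives $E_x\tau_L^n=L^{2n}E_{x/L}\tau_1^n$, and these moments are read off from the expansion of $\cosh(\sqrt{2\lambda}\,x_i/L)/\cosh\sqrt{2\lambda}$ at $\lambda=0$ in precisely the same fashion.
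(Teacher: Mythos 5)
Your proof is correct, and your first route (Taylor expansion of $E_0e^{-\lambda\tau_1}=1/\cosh\sqrt{2\lambda}$ via the classical $\operatorname{sech}$ series) is essentially the argument the paper leaves implicit: the lemma is stated there with no proof beyond a citation for the Euler numbers. Your second route, integrating the density series $P_{\tau_1}(s)=\sum_{k\geq 1}(-1)^{k+1}\tfrac{(2k-1)\pi}{2}e^{-s\pi^2(2k-1)^2/8}$ term by term and recognizing the Dirichlet beta values $\beta(2n+1)$, is an independent consistency check absent from the paper, and your Fubini justification is adequate because $\sum_k(2k-1)^{-(2n+1)}<\infty$ for every $n\geq 1$.

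Notably, your (correct) computation exposes an error, or at least a misprint, in the paper's displayed formulas. You obtain $E_0\tau_1^k=\frac{2^k\,k!}{(2k)!}\lvert E_{2k}\rvert=\frac{\lvert E_{2k}\rvert}{(2k-1)!!}$ (double factorial), whereas the paper writes $E\tau_1^k=\frac{E_{2k}}{(2k-1)!}$ (single factorial). The two agree at $k=1$, where both give $E_0\tau_1=1$ and hence $E_0\tau_L=L^2$, but they differ for $k\geq 2$. Indeed, from $1/\cosh\sqrt{2\lambda}=1-\lambda+\tfrac{5}{6}\lambda^2-\tfrac{61}{90}\lambda^3+\cdots$ one reads off $E_0\tau_1^2=2!\cdot\tfrac{5}{6}=\tfrac{5}{3}$ and $E_0\tau_1^3=3!\cdot\tfrac{61}{90}=\tfrac{61}{15}$, matching your formula, while the paper's expression yields $\tfrac{5}{6}$ and $\tfrac{61}{120}$ --- the first of these is the Taylor coefficient rather than the moment, and the second matches nothing. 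So your derivation should be read as a correction of the stated constants, not merely a proof of them; the only caveat worth adding is to state explicitly which sign convention for the Euler numbers ($E_2=-1$, $E_4=5$, $E_6=-61$ versus the unsigned table values) you are using, since the paper's citation to Gradshteyn--Ryzhik suggests the unsigned convention.
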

\begin{align*}
E_0\tau_{L}^{k} =L^{2k}E\tau_{1}^{k}, && E\tau_{1}^{k}=\frac{E_{2k}}{(2k-1)!}
\end{align*}
here $E_{2k}$ are Euler numbers, see~\cite{gradshteyn2014table}.


Let us prove similar formulas for the complete cycle on $Sp(N, L)$, i.e. the set of N intervals $[0, L]$, connected at point 0 by Kirchhoff's gluing conditions and reflection boundary conditions at the endpoints $L_i=L$, $i=1,2\dots, N$. Such cycles contain the transition from 0 to $\partial L$ and back to 0 from $\partial L$, i.e. ($\tau_0+\tilde{\tau}_0$), then,  
$$E_0 e^{-\lambda(\tau_0+\tilde{\tau}_0)}= \frac{1}{\cosh^{2} \sqrt{2\lambda}L} $$

also, if 
\begin{align*}
T_{N}=\xi_{1}+\xi_{2}+\cdots+\xi_{n} \quad \text { where } \quad \xi_{1}=\left(\tau_{1}+\tilde{\tau_{1}}\right), \dots, \xi_{N}=\left(\tau_{n}+\tilde{\tau_{n}}\right) 
\end{align*}
$\xi_{1}, \xi_{2}, \dots, \xi_{N}$ are $N$ complete Brownian motion cycles on the corresponding spider legs,
then after normalization, 
\begin{align*}
e^{-\lambda \frac{T_{N}}{L^{2}}}=\left(\frac{1}{\cosh ^{2} \sqrt{2 \lambda}}\right)^{N}
\end{align*}
Let us consider now the more general problem of the same type. Instead of $L$-neighborhood of the origin with boundary $\partial_L$ consider the general neighborhood with the endpoints $L_1, \dots, L_N$ on the legs $l_i$, $i=1,2,\dots, N$.\\

\subsection{Exit from the arbitrary neighborhood of the origin}\label{section 5}
 Our goal is to study the distribution of the first exit time $\tau_0(L_1,\dots, L_N)$ and the distribution of the point of the first exit, i.e. point $b\left({\tau_o(L_1,\dots, L_N)}\right)$ for the spider with N legs of lengths $L_1, \dots, L_N$\\

 We start with the second problem. 
Let \begin{align*}
u_i(x)=P_x \left(b({\tau_0}({L_1, \dots, L_N}))=L_i\right) && x \in Sp_N(L_1, \dots, N)
\end{align*}
then (for $i=1$)
\begin{align*}
\frac{1}{2} \frac{d^{2} u}{d x_{i}^{2}}-k u=0 && u_1(L_1)=1, && u_1(L_j)=0 && j \neq1 \quad \text { boundary values + Kirchhoff's gluing condition at $0$ }
\end{align*}
Function $u_i(x)$ is linear on each leg. It is easy to see for $i \neq 1$ 
\begin{align*}
u_i(x_1) = u_i(0) (1- \frac{x}{L_i})&&\text{and}&&u_1(x_1)= \frac{1-u_1(0)}{L_1} x_1 + u_1(0)
\end{align*}
Using Kirchhoff's condition at the origin, we will find 
\begin{align}
u_1(0) = P_0\{b\left( \tau_0(L_1, \dots, L_N)\right) =L_1 \} =  \frac{\frac{1}{L_1}}{\sum_{i=1}^{N}\frac{1}{L_i}}
\end{align}
or in general, 
\begin{align}
u_i(0) = P_0\{ b\left(\tau_0(L_1, \dots, L_N)\right) =L_i \} =  \frac{\frac{1}{L_i}}{\sum_{j=1}^{N}\frac{1}{L_j}}
\end{align}
Of course the maximum exit probability $u_i(0)$ corresponds to the shortest leg: $\min_{j} L_j$ 
We will find the distribution of $\tau_0(L_1, \dots, L_N)$. 
\begin{theorem}
The Laplace transform of the first exit time is
\begin{align*}
E_0 e^{-\lambda(\tau(L_1,\dots, L_n))}=\nu_{\lambda}(0)= \frac{\sum_{i=1}^{N} \frac{\sqrt{2\lambda}}{\sinh \sqrt{2 \lambda} L_{i}}}{\sum_{i=1}^{N}\sqrt{2\lambda}\frac{\cosh \sqrt{2 \lambda} L_{i}}{\sinh \sqrt{2 \lambda} L_{i}}}
\end{align*}
\end{theorem}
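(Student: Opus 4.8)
The plan is to reduce the computation of $\nu_{\lambda}(0)=E_0 e^{-\lambda\tau_0(L_1,\dots,L_N)}$ to a boundary value problem for the spider Laplacian, exactly along the lines of the argument that produced \eqref{9}. Set
\begin{align*}
\nu_{\lambda}(x)=E_x\,e^{-\lambda\tau_0(L_1,\dots,L_N)},\qquad x\in Sp_N(L_1,\dots,L_N).
\end{align*}
By the Feynman--Kac/Dynkin formula applied to the stopped functional $e^{-\lambda t}\nu_{\lambda}(b(t))$, the function $\nu_{\lambda}$ solves $\tfrac12\nu_{\lambda}''-\lambda\nu_{\lambda}=0$ on the interior of each leg, with the absorbing condition $\nu_{\lambda}(L_i)=1$ at every endpoint (since $e^{-\lambda\tau_0}$ is read off the instant $b$ reaches $\partial$), together with the Kirchhoff gluing conditions \eqref{a} and \eqref{b} at the origin, which hold because $\nu_{\lambda}$ must lie in the domain of the generator.

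Next I would solve this ODE leg by leg. On the $i$-th leg, with local coordinate $x_i\in[0,L_i]$, the general solution is
\begin{align*}
\nu_{\lambda}(x_i)=a_i\cosh\!\big(\sqrt{2\lambda}\,x_i\big)+b_i\sinh\!\big(\sqrt{2\lambda}\,x_i\big).
\end{align*}
Continuity \eqref{a} at the origin forces the common value $a_i=\nu_{\lambda}(0)=:c$ for all $i$, while the endpoint condition $\nu_{\lambda}(L_i)=1$ determines $b_i=\big(1-c\cosh\sqrt{2\lambda}L_i\big)/\sinh\sqrt{2\lambda}L_i$ (here $\lambda>0$ keeps all $\sinh\sqrt{2\lambda}L_i>0$, so no denominator vanishes).

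The final algebraic step uses the flux condition \eqref{b}. Since the outward derivative at the origin on leg $i$ is $\nu_{\lambda}'(0)=\sqrt{2\lambda}\,b_i$, the requirement $\sum_{i=1}^N\nu_{\lambda}'(0)=0$ reduces to $\sum_{i=1}^N b_i=0$, i.e.
\begin{align*}
\sum_{i=1}^N\frac{1}{\sinh\sqrt{2\lambda}L_i}=c\sum_{i=1}^N\frac{\cosh\sqrt{2\lambda}L_i}{\sinh\sqrt{2\lambda}L_i}.
\end{align*}
Solving for $c=\nu_{\lambda}(0)$ and clearing a common factor $\sqrt{2\lambda}$ in numerator and denominator yields precisely the asserted expression.

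The genuinely substantive point — the one place where care is required — is the justification that $\nu_{\lambda}$ is an admissible element of the operator domain, so that the derivative-sum condition \eqref{b} may legitimately be imposed on it; this is where the probabilistic content (the strong Markov property of $b(t)$ at the hitting times of the legs and the $1/N$ branching at the head, as described in the construction of $x_l(t)$) enters. Once that is in hand, everything reduces to the solution of the second-order ODE and a single linear equation for the scalar $c$, as above, so no further analytic obstacle remains.
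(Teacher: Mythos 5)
Your proposal is correct and follows essentially the same route as the paper: reduce $\nu_{\lambda}$ to the ODE $\tfrac12\nu_{\lambda}''-\lambda\nu_{\lambda}=0$ on each leg with $\nu_{\lambda}(L_i)=1$, impose continuity and the Kirchhoff flux condition at the origin, and solve the resulting single linear equation for $\nu_{\lambda}(0)$. Your $\cosh/\sinh$ parametrization is just an algebraic rewriting of the paper's interpolation form $\nu_{\lambda}(x_i)=\big(\nu_{\lambda}(0)\sinh\sqrt{2\lambda}(L_i-x_i)+\sinh\sqrt{2\lambda}\,x_i\big)/\sinh\sqrt{2\lambda}L_i$, and it leads to the identical equation $\sum_i \operatorname{csch}\sqrt{2\lambda}L_i = \nu_{\lambda}(0)\sum_i\coth\sqrt{2\lambda}L_i$.
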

\begin{proof}
Let,
$$\nu_{\lambda}=E_x e^{-\lambda \tau_0(L_1, \dots, L_N)} $$
then, 
\begin{align*}
\frac{1}{2} \frac{d^{2} \nu_{\lambda}}{d x^{2}}-\lambda \nu_{\lambda}=0, && \text{with}\qquad \nu_{\lambda}(L_i)=1 && i=1,2,\dots, N
\end{align*}
plus, as usually, the Kirchhoff's condition at 0. 
On each leg $l_i$ 
\begin{align}
\nu_{\lambda}(x_{i})=\frac{\nu_{\lambda}(0) \sinh \sqrt{2 \lambda}\left(L_i-x\right)+\sinh \sqrt{2 \lambda} x_{i}}{\sinh \sqrt{2 \lambda} L_i} 
\end{align}
The unknown parameter $\nu_{\lambda}(0)$ can be found from the gluing condition at 0. 
\begin{align*}
0=\sum_{j=1}^{N} \frac{d \nu_{\lambda}}{d x_{i}} /_{x_{i}=0}=-\nu_{\lambda}(0) \sqrt{2 \lambda} \left(\sum_{i=1}^{N} \frac{\cosh \sqrt{2 \lambda} L_{i}}{\sinh \sqrt{2 \lambda} L_{i}}\right)+\sqrt{2\lambda} \sum_{i=1}^{N} \frac{1}{\sinh \sqrt{2 \lambda} L_{i}}
\end{align*}
then, \begin{align}
\nu_{\lambda}(0)= \frac{\sum_{i=1}^{N} \frac{\sqrt{2\lambda}}{\sinh \sqrt{2 \lambda} L_{i}}}{\sum_{i=1}^{N}\sqrt{2\lambda}\frac{\cosh \sqrt{2 \lambda} L_{i}}{\sinh \sqrt{2 \lambda} L_{i}}}
\end{align}
\end{proof}
\begin{remark} We added the factor $\sqrt{2\lambda}$ in both parts of the fraction for regularization. \end{remark}

Using Taylor's expansion for $\cosh$ and $\sinh$ near $\lambda=0$, one can find 
\begin{align*}
-\frac{d \nu_{\lambda}(0)}{d \lambda}/_{\lambda=0} = E_0 \tau_0(L_1, \dots, L_N) =\frac{\sum_{i=1}^{N} L_i}{\sum_{i=1}^{N}\frac{1}{L_i}}
\end{align*}

If $L_i=L$ then 
$$E_0 \tau_0(L_1\dots, L_N)= E \tau_0(L)=L^2$$

\subsection{Generalization of section \ref{section 5}}

Now we will study different problems of the same type. Let,
\begin{align*}
\tau_{L, 1}=\min \left(t: b_1(t)=L\right)
\end{align*}
without any conditions on the random walk on the legs with numbers $2,3, \dots, N$. The Laplace transform
$u(k, x)=E_{x} e^{-k \tau_{L, 1}}$
satisfies the equation,
\begin{align*}
\frac{1}{2} \frac{d^{2} u}{d x_{1}^{2}}-k u=0, && u(x) /_{(x=x_{1}=L)}=1 \quad+\text { gluing conditions at the origin }+ \text { boundedness of the solutions }
\end{align*}
Then, like in the previous example, we will get
\begin{align*}
u\left(k, x_{j}\right)&=u(0) e^{-\sqrt{2 k} x_{j}}, & j \geq 2 \\
u\left(k, x_{1}\right)&=\frac{u(0) \sinh \sqrt{2 \lambda}\left(L-x_{1}\right)+\sinh \sqrt{2 \lambda} x_{1}}{\sinh \sqrt{2 \lambda} L} &
\end{align*}
From Kirchhoff's equation one can deduce,
\begin{align}
\nonumber
u(\lambda,L) & =E_{0} e^{-\lambda \tau_{L, 1}} \\\nonumber 
& =\frac{1}{\cosh \sqrt{2 \lambda} L+(N-1) \sinh \sqrt{2 \lambda} L} \\
& =\frac{2}{N} \times \frac{e^{-\sqrt{2 \lambda} L}}{1-\left(\frac{N-2}{N}\right) e^{-2 \sqrt{2 \lambda} L}} \label{the2.3}
\end{align}

\begin{theorem}
If $u(\lambda,L)=E_{0} e^{-\lambda \tau_{L, 1}}$, then for $N\geq 2$ 
$$u(\lambda,L) = \frac{2}{N} e^{-\sqrt{2\lambda}L} + \frac{2}{N}(1-\frac{2}{N}) e^{-3\sqrt{2\lambda}L}+ \frac{2}{N}(1-\frac{2}{N})^2 e^{-5\sqrt{2\lambda}L}+\dots$$ 
\end{theorem}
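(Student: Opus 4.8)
The plan is to obtain the series directly by expanding the closed form \eqref{the2.3} for $u(\lambda,L)$ as a geometric series; no new probabilistic input is needed beyond the Kirchhoff computation that already produced \eqref{the2.3}. First I would set $\theta=\sqrt{2\lambda}\,L$ and rewrite the denominator so that the ratio to be expanded is visible, namely
\begin{align*}
u(\lambda,L)=\frac{2}{N}\,\frac{e^{-\theta}}{1-\left(1-\frac{2}{N}\right)e^{-2\theta}},
\end{align*}
using that $\frac{N-2}{N}=1-\frac{2}{N}$. If one prefers to start one line earlier, the identity $\cosh\theta+(N-1)\sinh\theta=\tfrac12\big(Ne^{\theta}-(N-2)e^{-\theta}\big)$ converts the first expression in \eqref{the2.3} into this form.

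Next I would justify the geometric expansion. For $\lambda>0$ and $N\ge2$ one has $0\le 1-\frac{2}{N}<1$ and $0<e^{-2\theta}<1$, so the ratio $r:=\left(1-\frac{2}{N}\right)e^{-2\theta}$ satisfies $0\le r<1$ and the series $\sum_{k\ge0}r^{k}=\frac{1}{1-r}$ converges absolutely. Substituting and distributing the prefactor $\frac{2}{N}e^{-\theta}$ term by term gives
\begin{align*}
u(\lambda,L)=\frac{2}{N}\sum_{k=0}^{\infty}\left(1-\frac{2}{N}\right)^{k}e^{-(2k+1)\theta},
\end{align*}
which upon writing out $k=0,1,2,\dots$ is exactly the claimed expansion, since the exponents $2k+1$ run through $1,3,5,\dots$.

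I would then record two consistency checks. Setting $N=2$ kills every term past the first and returns $u=e^{-\sqrt{2\lambda}L}$, the correct one–sided hitting–time transform on the full line; and setting $\lambda=0$ gives $\frac{2}{N}\sum_{k\ge0}\left(1-\frac{2}{N}\right)^{k}=1$, confirming that $b_1(t)$ reaches level $L$ on leg $1$ almost surely, consistent with recurrence. I would also remark that the $k$-th term admits the heuristic reading of a path that returns to the origin $k$ times before finally running out to $L$ on leg $1$: the factor $\left(1-\frac{2}{N}\right)^{k}$ weights the $k$ ``failed'' macroscopic excursions, each carrying a round–trip transform $e^{-2\theta}$, while the final one–way run contributes the extra $e^{-\theta}$.

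The point worth flagging is that there is essentially no obstacle at the level of Laplace transforms: once \eqref{the2.3} is in hand the theorem is a one–line geometric series. The only genuine subtlety lies in the probabilistic reading above — turning ``number of returns to the origin'' into a rigorous statement requires excursion/local–time bookkeeping, which is precisely what the $\delta_l$-regularization and the ODE derivation of \eqref{the2.3} package away; hence I would keep that interpretation as a remark rather than as the proof.
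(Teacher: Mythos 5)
Your proposal is correct and follows essentially the same route as the paper: the paper's proof also expands the closed form \eqref{the2.3} as a geometric series in $\left(1-\frac{2}{N}\right)e^{-2\sqrt{2\lambda}L}$ (writing $p=\frac{2}{N}$, $q=1-\frac{2}{N}$) and reads off the terms as a mixture of one-sided stable laws with parameters $\alpha=\frac12$, $\beta=1$ at distances $L, 3L, 5L,\dots$. Your added convergence justification and the consistency checks at $N=2$ and $\lambda=0$ are fine refinements but not a different method.
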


\begin{proof}
 In fact, for $N=2$, the last formula, \eqref{the2.3} gives $\nu_0=e^{-\sqrt{2\lambda}L}$. This is the Laplace transform of the stable law with parameters $\alpha=\frac{1}{2}$, $\beta=1$. The corresponding density has the form 
 
 \begin{align*} P_{L}(s)= \frac{L}{\sqrt{2\pi s^3}} e^{-\frac{L^2}{2s}} \sim \frac{c}{s^{1+\frac{1}{2}}}, & s\to \infty \end{align*}
If, $N>2$ then, 
\begin{align*}
\frac{2}{N} \frac{e^{-\sqrt{2 k} L}}{1-\left(\frac{N-2}{N}\right) e^{-2 \sqrt{2 k} L}}= \frac{2}{N} e^{-\sqrt{2\lambda}L} + \frac{2}{N}(1-\frac{2}{N}) e^{-3\sqrt{2\lambda}L}+ \frac{2}{N}(1-\frac{2}{N})^2 e^{-5\sqrt{2\lambda}L}+\dots
\end{align*}
 This is the geometric progression of the mixture of the stable laws with $\alpha= \frac{1}{2}$, $\beta=1$ and constant $L$, $3L$, $5L$, \dots 
 Note: if we take $\frac{2}{N}=p$, $q= 1-\frac{2}{N}$ then the geometric progression becomes $pe^{-\sqrt{2\lambda}L}+p q e^{-3\sqrt{2\lambda}L}+p q^2 e^{-5\sqrt{2\lambda}L}+\dots $\\
\end{proof}



Consider the new generalization of this model, 
\begin{theorem}
Let, \begin{align*} \tau_{L, N_1}= \min\left(t: x_0(t) =L, \text{for the first $N_1$ legs, $l_1, \dots, l_{N_1}$}\right)
\end{align*} 
without any conditions on the Brownian motion on the remaining $N-N_1$ legs.
\end{theorem}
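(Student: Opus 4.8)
The plan is to follow the same Feynman--Kac boundary-value-problem strategy used for \eqref{the2.3} and in the exit-time computations of Section~\ref{section 5}, now with $N_1$ absorbing legs and $N-N_1$ free legs. Write $u(\lambda,x)=E_x e^{-\lambda\tau_{L,N_1}}$ for the Brownian motion $b(t)$. By the strong Markov property on each leg, $u$ solves $\tfrac12 u''-\lambda u=0$ leg by leg, subject to the absorbing condition $u(\lambda,L)=1$ on the first $N_1$ legs (once the process reaches level $L$ on a stopping leg, $\tau_{L,N_1}=0$ from there), the requirement of boundedness as $x_j\to\infty$ on the remaining $N-N_1$ free legs, and the Kirchhoff conditions \eqref{a}, \eqref{b} at the origin.

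First I would record the relevant solution on each type of leg. On a stopping leg $l_i$, $1\le i\le N_1$, matching $u(\lambda,0)=u(0)$ and $u(\lambda,L)=1$ gives
\begin{align*}
u(\lambda,x_i)=\frac{u(0)\sinh\sqrt{2\lambda}\,(L-x_i)+\sinh\sqrt{2\lambda}\,x_i}{\sinh\sqrt{2\lambda}\,L},
\end{align*}
while on a free leg $l_j$, $N_1<j\le N$, boundedness selects the decaying exponential
\begin{align*}
u(\lambda,x_j)=u(0)\,e^{-\sqrt{2\lambda}\,x_j}.
\end{align*}
Both families already take the common value $u(0)$ at the origin, so the continuity condition \eqref{a} is automatic and $u(0)$ is the single remaining unknown.

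Next I would impose the derivative condition \eqref{b}, $\sum_{k=1}^{N}u'_k(0)=0$. Differentiating at $0$ gives $\sqrt{2\lambda}\,(1-u(0)\cosh\sqrt{2\lambda}L)/\sinh\sqrt{2\lambda}L$ on each stopping leg and $-u(0)\sqrt{2\lambda}$ on each free leg; summing $N_1$ of the first and $N-N_1$ of the second, cancelling $\sqrt{2\lambda}$, and solving the resulting linear equation for $u(0)$ yields
\begin{align*}
E_0 e^{-\lambda\tau_{L,N_1}}=u(0)=\frac{N_1}{N_1\cosh\sqrt{2\lambda}\,L+(N-N_1)\sinh\sqrt{2\lambda}\,L}.
\end{align*}
Passing to exponentials and factoring $e^{-\sqrt{2\lambda}L}$ recasts this as $\tfrac{2N_1}{N}\,e^{-\sqrt{2\lambda}L}\bigl(1-\tfrac{N-2N_1}{N}e^{-2\sqrt{2\lambda}L}\bigr)^{-1}$, which for $0<N_1<N$ expands into the convergent geometric series $\tfrac{2N_1}{N}\sum_{k\ge0}\bigl(\tfrac{N-2N_1}{N}\bigr)^k e^{-(2k+1)\sqrt{2\lambda}L}$, a mixture of stable laws with $\alpha=\tfrac12$, $\beta=1$ on the scales $L,3L,5L,\dots$, exactly parallel to the preceding theorem. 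As a sanity check, $N_1=1$ reproduces \eqref{the2.3}, while $N_1=N$ collapses to $1/\cosh\sqrt{2\lambda}L$, recovering the ball exit time \eqref{10}.

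Since the algebra is routine, the points that actually deserve attention are probabilistic. I would first justify that $u$ solves the stated boundary-value problem, which needs the strong Markov property together with Dynkin's formula to handle the branching point; and I would note that $\tau_{L,N_1}$ is $P$-a.s.\ finite for $N_1\ge1$ (the reflected motion hits level $L$ in finite time on each leg it visits, and the origin is visited infinitely often), a fact confirmed a posteriori by $u(0)\to1$ as $\lambda\to0$. The one genuine modelling choice is the boundedness selection on the free legs: because the process is never stopped there, the admissible solution must remain bounded as $x_j\to\infty$, and this is precisely what rules out the growing exponential and isolates $u(0)e^{-\sqrt{2\lambda}x_j}$. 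I would state that principle explicitly before starting the computation, as it is the only step where the generalization differs conceptually from the $N_1=N$ case already treated.
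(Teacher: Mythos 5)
Your derivation is correct and follows exactly the route the paper itself uses: solve $\tfrac{1}{2}u''=\lambda u$ leg by leg, impose $u=1$ at the $N_1$ absorbing endpoints, select the bounded (decaying) solution on the $N-N_1$ free legs, and close the system with the Kirchhoff conditions \eqref{a}, \eqref{b}. However, your final formula disagrees with the one printed in the paper, and it is the paper, not you, that is in error. The paper asserts
\begin{align*}
E_0 e^{-\lambda\tau_{L,N_1}}=\frac{1}{\cosh\sqrt{2\lambda}\,L+(N-N_1)\sinh\sqrt{2\lambda}\,L},
\end{align*}
which is what one obtains by counting the absorbing-leg contribution to the Kirchhoff balance once instead of $N_1$ times; it is really the answer for a spider with $N-N_1+1$ legs, exactly one of which is absorbing. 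Your formula
\begin{align*}
E_0 e^{-\lambda\tau_{L,N_1}}=\frac{N_1}{N_1\cosh\sqrt{2\lambda}\,L+(N-N_1)\sinh\sqrt{2\lambda}\,L}
\end{align*}
keeps the multiplicity $N_1$ in the balance $N_1\bigl(1-u(0)\cosh\sqrt{2\lambda}L\bigr)/\sinh\sqrt{2\lambda}L=(N-N_1)\,u(0)$ and is the correct one. The two expressions coincide precisely at $N_1=1$ (formula \eqref{the2.3}) and at $N_1=N$ (formula \eqref{10}), which is why the paper's error is invisible in the only two cases it could check against earlier results.

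Your version can be confirmed independently by a renewal argument in the spirit of the paper's own cycle decompositions: the radial part of $b(t)$ is a reflected Brownian motion, so the first time the process reaches distance $L$ from the origin has transform $1/\cosh s$ with $s=\sqrt{2\lambda}L$, and by symmetry the leg on which this happens is uniform over all $N$ legs and independent of the time. With probability $N_1/N$ the run ends there; otherwise the process must return to $0$ from distance $L$ (transform $e^{-s}$) and start afresh. Solving $\phi=\frac{1}{\cosh s}\bigl(\frac{N_1}{N}+\frac{N-N_1}{N}e^{-s}\phi\bigr)$ gives exactly $\phi=N_1/\bigl(N_1\cosh s+(N-N_1)\sinh s\bigr)$; for instance, for $N=3$, $N_1=2$ this is $2/(2\cosh s+\sinh s)$, not $e^{-s}$ as the paper's formula would claim. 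Consequently the geometric expansion and the density should carry the weights $p=\tfrac{2N_1}{N}$, $q=\tfrac{N-2N_1}{N}$ as in your expansion (a signed mixture when $N_1>N/2$), rather than the paper's $p_{N_1}=\tfrac{2}{N-N_1+1}$, $q_{N_1}=1-\tfrac{2}{N-N_1+1}$; the paper's second displayed line additionally has an internal typo, since the ratio in its geometric denominator should be $q_{N_1}$ rather than $\tfrac{2}{N-N_1+1}$. Your closing remarks — boundedness as the selection principle on free legs, a.s.\ finiteness of $\tau_{L,N_1}$, and the two endpoint sanity checks — are all sound and worth keeping.
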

\begin{proof}
If $\nu (\lambda, L, N_1) = E_0 e^{- \lambda \tau_{L,N_1}}$ then, like in the previous case $N_1=1$, we will get 
\begin{align*}
\nu(\lambda, L, N_1) &=\frac{1}{\cosh \sqrt{2\lambda}L + (N-N_1) \sinh \sqrt{2\lambda}L }\\
&= \frac{2}{N-N_1 +1} \frac{e^{-\sqrt{2\lambda}L}}{\left(1-\frac{2}{N-N_1+1}e^{-2\sqrt{2\lambda}L}\right)}
\end{align*}
For the density of the distribution of $\tau(N,N_1)$ we have almost the same formula
\begin{align*}
P_{\lambda}(s, L, N_1)=\sum_{n=0}^{\infty} p_{N_1} q_{N_1}^{n} \frac{(2 n+1) L}{\sqrt{2 \pi s^3}} e^{- \frac{(2 n+1)^{2}L^2}{2s}}
\end{align*}
\begin{align*}\text{where}&&p_{N_1}= \frac{2}{N-N_1+1},&& q=1-\frac{2}{N-N_1+1} \end{align*}
\end{proof}

\subsection{Structure of the random variable $\tau_{L}$ (moderate and long excursion)}


Now our goal is to understand the structure of the r.v $\tau_{N, L}$: the first exit time from the spider with $N$ legs of the length $L$ ($L>1$ is a large parameter). If the process $b(t)$ starts from $0$ then it will visit each leg infinitely many times at every time interval. To exclude such very short excursions we will divide them into two classes. The first class contains moderate excursions from 0 to $\partial_1$ (i.e., at a distance of 1 from the origin). Such r.v $\tau_{1,i}$ have Laplace transform 
\begin{align*} E_0 e^{-\lambda \tau_{1,i}} = \frac{1}{\cosh \sqrt{2\lambda}}, && \text{where}\qquad E \tau_{1,i} =1 && \text{and} \qquad \mathrm{Var}(\tau_{1,i})= \frac{5}{6} \end{align*}
 After each moderate excursion, we have a potentially long excursion: transition from 1 to 0 or $L$, i.e. the exit time from $[0, L]$. Let us denote such excursion $\tilde{\tau}_{[0,L]}$. 
Then for,  \begin{align*}
E_{x} e^{-\lambda \tau_{[0,L]}}= \nu_{\lambda}(x) \\
\frac{1}{2} \psi_{i}^{\prime \prime}-\lambda \psi_{i}=0\\
\nu_{\lambda}(0)=\nu_{\lambda}(L)=1
\end{align*}
elementary calculations give 
\begin{align*}
\nu_{\lambda}(x) =\frac{\sinh \sqrt{2 \lambda}\left(L-x\right)+\sinh \sqrt{2 \lambda} x}{\sinh \sqrt{2 \lambda} L} 
\end{align*}
i.e. (for the long excursion $x=1$)
\begin{align*}
\nu_{\lambda}(1) =\frac{\sinh \sqrt{2 \lambda}\left(L-1\right)+\sinh \sqrt{2 \lambda}}{\sinh \sqrt{2 \lambda} L} 
\end{align*}
as easy to see, 
$$
 P_{x}\{ b_{\tau_{[0,L]}} =1 \} =\frac{x}{L}$$
The mean length of long plus moderate excursions equals $(L-1)+1=L$, and the process starting from 1 will exit from $Sp(N, L)$ with probability $\frac{1}{L}$. 
We denote the number of such cycles (i.e. excursions) until the first exit from $Sp(N, L)$ by $\nu_{L}$: geometrically distributed r.v. with parameter $\frac{1}{L}$ $\left( P\{ \nu_{L}=k\} = (1-\frac{1}{L})^{k-1}\cdot \frac{1}{L}\right)$. It gives,
\begin{align*}E\nu_L=L, && E \nu_L^{2} =L^2\end{align*} 
\subsection{Limit theorems for the number of cycles}

 By definition, any cycle on the graph $Sp(N; N_1, L)$ containing N legs of length L and $N-N_1$ infinite legs starts from the origin (0) and returns to 0 after visiting one of the points $x_i=1$, $i=1,2, \dots, N$, but not the boundary $\partial_ {L}$. Boundary $\partial_ {L}$ contains $N_1$ points. Consider two cases. \\

\textbf{A)} $N_1=N$, i.e. all $N$ legs have length $L$ (see above) 

 \begin{theorem} \label{theorem2.4} Let $L \to \infty$. Then $P\{ \frac{\nu_L}{L} > x \} =e^{-x}$, $x \geq 0$ \end{theorem}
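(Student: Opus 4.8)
The plan is to exploit the explicit tail of a geometric random variable and reduce the statement to the classical limit $(1 - 1/L)^L \to e^{-1}$. Recall from the preceding subsection that $\nu_L$ counts the number of moderate-plus-long cycles before the first exit from $Sp(N,L)$, and that it is geometric with parameter $1/L$, i.e. $P\{\nu_L = k\} = (1 - 1/L)^{k-1}\cdot\frac1L$ for $k\geq 1$. The first step is to write down the survival function: since $\{\nu_L > m\}$ is exactly the event that the first $m$ cycles all fail to reach $\partial_L$ (each failing with probability $1 - 1/L$), one gets
\[
P\{\nu_L > m\} = \left(1 - \frac{1}{L}\right)^{m}
\]
for every nonnegative integer $m$. (Equivalently this is the sum of the geometric series $\sum_{k>m}(1-1/L)^{k-1}(1/L)$.)

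Next I would translate the rescaled tail into this survival function. Because $\nu_L$ is integer valued, for fixed $x\geq 0$ we have $\{\nu_L/L > x\} = \{\nu_L > Lx\} = \{\nu_L > \lfloor Lx\rfloor\}$, so that
\[
P\left\{\frac{\nu_L}{L} > x\right\} = \left(1 - \frac{1}{L}\right)^{\lfloor Lx\rfloor}.
\]
Writing the exponent as $\lfloor Lx\rfloor = L\cdot(\lfloor Lx\rfloor/L)$ and using $(1-1/L)^{L}\to e^{-1}$ together with $\lfloor Lx\rfloor/L\to x$ as $L\to\infty$, the right-hand side converges to $(e^{-1})^{x} = e^{-x}$, which is the asserted limit.

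The argument is essentially routine, and the only point requiring any care is the floor function. To make the passage to the limit rigorous I would sandwich the exponent using $Lx - 1 \leq \lfloor Lx\rfloor \leq Lx$; since the base lies in $(0,1)$ a larger exponent gives a smaller value, so
\[
\left(1 - \frac{1}{L}\right)^{Lx} \leq \left(1 - \frac{1}{L}\right)^{\lfloor Lx\rfloor} \leq \left(1 - \frac{1}{L}\right)^{Lx - 1}.
\]
Both outer bounds tend to $e^{-x}$, since $(1-1/L)^{Lx} = \bigl[(1-1/L)^{L}\bigr]^{x}\to e^{-x}$ and the extra factor $(1-1/L)^{-1}\to 1$, and the squeeze delivers the result. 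There is no genuine obstacle here: the content of the theorem is simply the standard fact that a geometric law rescaled by its mean converges to the unit exponential, and the probabilistic identification of $\nu_L$ as the cycle count is precisely what was set up in the previous subsection.
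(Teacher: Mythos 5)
Your proof is correct, but it takes a genuinely different route from the paper. The paper works with the generating function: it records $E_{0}z^{\nu_{L}}=\frac{z}{L-(L-1)z}$, substitutes $z=e^{-\lambda/L}$ to get the Laplace transform of $\nu_L/L$, expands the denominator to show $E_{0}e^{-\lambda\nu_L/L}\to\frac{1}{1+\lambda}$ as $L\to\infty$, and identifies $\frac{1}{1+\lambda}$ as the Laplace transform of the $\mathrm{Exp}(1)$ law — so it implicitly relies on the continuity theorem (convergence of Laplace transforms implies convergence in law). You instead compute the survival function exactly, $P\{\nu_L>m\}=\left(1-\frac{1}{L}\right)^{m}$, and pass to the limit with a floor-function sandwich, obtaining pointwise convergence of the tail $P\{\nu_L/L>x\}\to e^{-x}$ directly — no transform machinery and no continuity theorem needed, and the conclusion matches the statement of the theorem literally (convergence of distribution functions rather than convergence in law deduced from transforms). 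What your route gives up is reusability: the paper's explicit formula $E_{0}z^{\nu_{L}}=\frac{z}{L-(L-1)z}$ is not just a device for this theorem but the key ingredient for the next one, where the total cycle count on $Sp(N;1,L)$ is a geometric compound $(\mu_1+1)+\cdots+(\mu_{\nu_L}+1)$ and the composition of generating functions yields $E_{0}z^{\nu_{L,N}}=\frac{z}{LN-(LN-1)z}$, from which that theorem follows by citing this one. So both proofs are sound; yours is the more elementary and self-contained, the paper's is the one that feeds the subsequent argument.
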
  

\begin{proof}
Is is easy to see that, \begin{align*} E_{0}z^{\nu_{L}} =\frac{z}{L-(L-1)z}, && E\nu_{L}=L \end{align*}
Then, \begin{align*} E_{0}e^{-\frac{\lambda \nu_{L}}{L}} =\frac{e^{-\frac{\lambda}{L}}}{L-(L-1)e^{-\frac{\lambda}{L}}}=\frac{e^{-\frac{\lambda}{L}}}{1+\frac{\lambda(L-1)}{L}+\mathcal{O}(\frac{1}{L^2})}\xrightarrow[L \to \infty]{} \frac{1}{1+\lambda}\end{align*}
and $\frac{1}{1+\lambda}$ is the Laplace transform of $Exp(1)$ law.
\end{proof}

\textbf{B)} $N_1=1$, i.e. only one leg, say $Leg_1$, has length $L$, all other legs are infinite. The creation of the cycles by the Brownian motion on $Sp(N; 1, L)$ includes two Bernoulli experiments: starting from the origin $0$ Brownian motion enters point $x_1=1$ with probability $\frac{1}{N}$ (we call such event, success S) and one of the points $x_j=1$, $j=2, \dots, N$ with probability $\frac{N-1}{N}$ (failure F). The number of failures before the first success, denoted by$\mu_1$ has the geometric law,
 

\begin{align*}  P\{ \mu_1=0\}&= \frac{1}{N}, P\{\mu_1=k\}=\frac{1}{N}\left(\frac{N-1}{N}\right)^k, k>1 \\
 E_{0}z^{\mu_1}& =
 \frac{1}{N-(N-1)z}
\end{align*}

The number of successes S before the first moment, when $b(t)=x_1=1$, has the distribution $\nu_L$ which we already discussed: 
\begin{align*}P\{ \nu_{L}=1\} &= \frac{1}{L}, P\{\nu_L=k \}= \frac{1}{L} (\frac{L-1}{L})^{k-1}, k>1 \\
E_0\nu_L&=\frac{z}{L-(L-1)z}\end{align*}

Finally, the total number of cycles has the form $(\mu_1+1)+ \dots + (\mu_{\nu_L}+1)$ (the r.v $\mu_i$ are independent and have the law of $\mu_1$), i.e.
\begin{align}
 E_{0}z^{\nu_{L,N}}= E_0 z^{\mu_1 + \dots +\mu_{\nu_L}}=\frac{\frac{z}{N-(N-1)z}}{L-(L-1)\frac{z}{N-(N-1)z}}=
 \frac{z}{LN-(LN-1)z}\label{20}\end{align}
Using the previous theorem we will get
\begin{theorem}
For fixed N and $L \to \infty$  \begin{align*}P_0\{ \frac{\nu_{N,L}}{L} >x \} \to e^{-\frac{x}{N}} &&\text{(exponential law with paramter $\frac{1}{N}$, i.e expectation N )} \end{align*} 
\end{theorem}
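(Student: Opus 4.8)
The plan is to mimic the proof of the preceding theorem (Theorem~\ref{theorem2.4}) by passing from the generating function \eqref{20} to a Laplace transform and then letting $L \to \infty$. The starting point is the explicit formula
\[
E_0 z^{\nu_{L,N}} = \frac{z}{LN - (LN-1)z},
\]
and I first observe that this is exactly the generating function of the cycle count $\nu_M$ from case A with $M = LN$. Hence $\nu_{L,N}$ has the same law as $\nu_{LN}$, and since Theorem~\ref{theorem2.4} gives $\nu_M/M \to \mathrm{Exp}(1)$, the variable $\nu_{L,N}/L = N \cdot (\nu_{L,N}/(LN))$ should converge to $N \cdot \mathrm{Exp}(1)$, i.e. the exponential law with mean $N$. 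This structural remark already predicts the answer; the remaining work is to confirm it by a direct limit.

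To carry this out I would substitute $z = e^{-\lambda/L}$ into \eqref{20} to obtain the Laplace transform of the scaled variable,
\[
E_0 e^{-\lambda \nu_{L,N}/L} = \frac{e^{-\lambda/L}}{LN - (LN-1)e^{-\lambda/L}}.
\]
Expanding $e^{-\lambda/L} = 1 - \tfrac{\lambda}{L} + \mathcal{O}(L^{-2})$ and inserting this into the denominator, the constant terms $LN - (LN-1)$ collapse to $1$, the first-order term yields $(LN-1)\tfrac{\lambda}{L} \to N\lambda$, and the remaining contributions are $\mathcal{O}(L^{-1})$; the numerator tends to $1$. Therefore
\[
E_0 e^{-\lambda \nu_{L,N}/L} \xrightarrow[L \to \infty]{} \frac{1}{1 + N\lambda}.
\]

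Finally I would recognize $\tfrac{1}{1+N\lambda}$ as the Laplace transform of the exponential distribution with parameter $1/N$, since for $X \sim \mathrm{Exp}(1/N)$ one has $E e^{-\lambda X} = \tfrac{1/N}{1/N+\lambda} = \tfrac{1}{1+N\lambda}$. By the continuity theorem for Laplace transforms, convergence of the transforms entails convergence in distribution, and therefore $P_0\{\nu_{N,L}/L > x\} \to e^{-x/N}$ for every $x \geq 0$, which is the claim. The argument is entirely parallel to Theorem~\ref{theorem2.4}, so I do not expect a genuine obstacle; the only step demanding a little care is the bookkeeping of the error terms in the denominator, to be sure that they vanish in the limit and do not perturb the surviving $N\lambda$ term.
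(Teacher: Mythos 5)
Your proposal is correct and follows essentially the same route as the paper: the paper's proof is exactly the one-line reduction you make, namely that formula \eqref{20} identifies $\nu_{L,N}$ as having the case-A generating function with $L$ replaced by $LN$, so Theorem~\ref{theorem2.4} (whose Laplace-transform computation you reproduce with the parameter $LN$) yields the exponential limit with mean $N$ after rescaling. Your explicit verification that the transform tends to $\frac{1}{1+N\lambda}$ is just an unpacking of what the paper leaves implicit, and the bookkeeping of error terms is handled correctly.
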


\begin{proof}
The proof follows from theorem~\ref{theorem2.4} and \eqref{20}.
\end{proof}


\subsection{Time to cover all $N$ legs (Erd\"{o}s-Renyi model)}
We now want to find the asymptotic law for the time $T_N= \{ \text{first moment when Brownian motion $b(t)$ will cover all legs of $Sp(N, L)$} \} $.\\
We will use the following result by Erd\"{o}s-Renyi, related to the Maxwell-Boltzmann experiment~\cite{kolchin1978random}. Let us recall that we impose the boundary condition on the boundary $\partial_{L}$ of $Sp(N, L)$.\\

 Consider N boxes and the random distribution of the particles between boxes. On each step, one particle with probability $\frac{1}{N}$ will be placed into each box. 
We are interested in the random moment $\nu_N$: number of the steps to occupy all N boxes. Of course,\\
$$ \nu_N= \nu_N^{1} + \nu_{N}^{2}+\dots +\nu_N^{N} $$ 
here, \begin{align*}
\nu_{N}^{1} &=1 : \text{one step to occupy one box} \\
\nu_{N}^{2} &= \text{\# of steps to occupy second box, after first step $\nu_N^{1}=1$}\\
\nu_{N}^{3} & =\text{\# of steps to occupy the third box, after the occupation of the second}\\
\dots \\
\nu_{N}^{N} &= \text{\# of steps to occupy the very last empty box, after  $\nu_N^{1}+\dots+ \nu_N^{N-1}$ previous steps}
\end{align*}
The r.v. $\nu_N^{k}$ are geometrically distributed and independent. For $k \geq 2$\\
\begin{align*}
P \{ \nu_N^{k} =n \} = \left(\frac{k-1}{N} \right)^{n-1} \left(\frac{N-k-1}{N} \right) \\
E \nu_{N}=\sum_{k=1}^{N}E\nu_{N}^k=N\left(\sum_{k=0}^{N-1}\frac{1}{N-k}\right)=N\left(\ln N+\gamma+\mathcal{O}(\frac{1}{N})\right),&& \mathrm{Var}\nu_N=\sum_{k=1}^{N}\mathrm{Var}{\nu_N^{k}} =\sum_{k=0}^{N-1}\frac{NK}{(N-k)^2}
\end{align*}
here, $\gamma=.5772$ is Euler's constant. Due to classical result (Erd\"{o}s- Renyi theorem) 
\begin{align} P\{ \frac{\nu_N}{N}- \ln N < x\} \to e^{-e^{-x}} =\mathbb{G}(x)\qquad \text{as $N \to \infty$} \label{nuN}\end{align} 
Then, \begin{align*}P\{ \frac{\nu_N-E\nu_N}{N}<x\}&=P\{\frac{\nu_N}{N}-(\ln N+\gamma+\mathcal{O}(\frac{1}{N}))<x\}\\
&= P\{\frac{\nu_N}{N}-\ln N<x+\gamma+\mathcal{O}(\frac{1}{N})\} \rightarrow e^{-e^{-(x+\gamma)}} =\mathbb{\tilde{G}}(x)\end{align*}

Let us stress that double exponential law has a non-zero mean value: 
$$\int_{\mathbb{R}}x d\mathbb{G}(x)=\int_{\mathbb{R}}x e^{-x} e^{-e^{-x}} dx= \int_{0}^{\infty} \ln{t} e^{-t}dt =\gamma$$
but, $\int_{\mathbb{R}} x d\mathbb{\tilde{G}}=0$,  i.e asymptotically, 
\begin{align*} \nu_N = N \ln N + N \zeta_N && \text{where $ \zeta_N \xrightarrow{\text{law}} \mathbb{G}$ (Gumble distribution)} \end{align*}
Each step in our new experiment is the occupation of one box, that is, complete covering by the trajectory $b(s)$ of one of the legs (transition from 0 to endpoint L of this leg and back to 0, after reflection at point L ). The length of this step is the random variable $\tau_L =L^2 \tilde{\tau}_1$, with 
\begin{align*}
&&Ee^{-\lambda \tilde{\tau_1}} =\frac{1}{\cosh^2\sqrt{2\lambda}} =
1-2\lambda+\frac{8}{3}\lambda^2- \dots \end{align*}
\begin{align*}\text{which implies,}&&a= E \tilde{\tau} = 2
&& E \tilde{\tau}^2 = \frac{8}{3} 
&&\mathrm{var} \tilde{\tau}= \sigma^2 = (2)^2 -\frac{8}{3} =\frac{4}{3}
\end{align*} 
Our goal is to study $T_N= \sum_{j=1}^{\nu_N}\tau_j= L^2 \sum_{j=1}^{\nu_N}\tilde{\tau}_{j}$, that is, total time of occupation by the Brownian motion of $Sp(N, L)$(with reflection condition at all N endpoints L).\\

\begin{theorem}\label{th2.8}
The distribution of the total time for the Brownian motion to cover $Sp(N, L)$, the spider graph with finite number of legs of length $L=1$ has the form \begin{align*}\tilde{T}_N=a N \ln N + a N\zeta_N + \sigma \sqrt{N \ln N } \eta_N + 
+ \mathcal{O}\left(\sqrt{\frac{N}{\ln N }} \right) && \text{where $a=2$ and $\sigma=\sqrt{\frac{4}{3}}$}\end{align*} 
Here $\zeta_{N}$, $\eta_{N}$ are asymptotically independent and have Gumble law \eqref{nuN} and Gaussian law $\mathcal{N}(0,1)$
\end{theorem}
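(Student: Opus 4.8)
The plan is to treat $\tilde T_N=\sum_{j=1}^{\nu_N}\tilde\tau_j$ as a \emph{randomly stopped sum} of i.i.d.\ cycle lengths and to combine the Erd\"os--R\'enyi (Gumbel) asymptotics for the coupon--collector count $\nu_N$ with a random--index central limit theorem for the partial sums of the $\tilde\tau_j$. The decisive structural fact, which I would isolate first, is that $\nu_N$ depends only on the combinatorial sequence of legs visited, whereas each cycle duration $\tilde\tau_j$ has (after setting $L=1$) the law with Laplace transform $1/\cosh^2\sqrt{2\lambda}$, mean $a=2$ and variance $\sigma^2=\tfrac43$; by symmetry of the legs these two sources of randomness are independent, so $\nu_N$ is independent of the sequence $\tilde\tau_1,\tilde\tau_2,\dots$.

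Next I would centre the summands. Writing $\xi_j=(\tilde\tau_j-a)/\sigma$ for the standardized i.i.d.\ durations and $S_m=\sum_{j=1}^m\xi_j$, one has the exact decomposition
\begin{align*}
\tilde T_N = a\,\nu_N + \sigma\,S_{\nu_N}.
\end{align*}
The Erd\"os--R\'enyi limit \eqref{nuN} is precisely the statement $\nu_N=N\ln N+N\zeta_N$ with $\zeta_N\xrightarrow{\text{law}}\mathbb G$, so the first term already supplies the deterministic and Gumbel pieces $aN\ln N+aN\zeta_N$. For the fluctuation term I would invoke Anscombe's theorem: since $\nu_N\to\infty$ and $\nu_N/(N\ln N)\to1$ in probability, the random--index CLT gives $S_{\nu_N}/\sqrt{\nu_N}\xrightarrow{\text{law}}\mathcal N(0,1)$, and I set $\eta_N:=S_{\nu_N}/\sqrt{\nu_N}$ so that $\sigma S_{\nu_N}=\sigma\sqrt{\nu_N}\,\eta_N$. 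It then remains to pass from the random scale $\sqrt{\nu_N}$ to the deterministic scale $\sqrt{N\ln N}$ via
\begin{align*}
\sqrt{\nu_N}=\sqrt{N\ln N}\,\Bigl(1+\tfrac{\zeta_N}{\ln N}\Bigr)^{1/2}=\sqrt{N\ln N}+\tfrac{\zeta_N}{2}\sqrt{\tfrac{N}{\ln N}}+\cdots,
\end{align*}
and, using that $\zeta_N$ and $\eta_N$ are $\mathcal O_P(1)$, the cross contribution $\tfrac{\sigma}{2}\zeta_N\eta_N\sqrt{N/\ln N}$ is absorbed into the remainder, producing the claimed expansion
\begin{align*}
\tilde T_N = aN\ln N + aN\zeta_N + \sigma\sqrt{N\ln N}\,\eta_N + \mathcal O\!\left(\sqrt{\tfrac{N}{\ln N}}\right).
\end{align*}

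Finally I would establish the asymptotic independence of $\zeta_N$ and $\eta_N$ by conditioning on the value of $\nu_N$: because $\{\tilde\tau_j\}$ is independent of the visiting sequence determining $\nu_N$, the conditional law of $\eta_N=S_{\nu_N}/\sqrt{\nu_N}$ given $\{\nu_N=m\}$ is exactly that of $S_m/\sqrt m$, whose Gaussian limit does not depend on $m$; letting $N\to\infty$ and integrating against the Gumbel law of $\zeta_N$ yields the joint convergence $(\zeta_N,\eta_N)\xrightarrow{\text{law}}\mathbb G\otimes\mathcal N(0,1)$. The main obstacle I anticipate is the rigorous justification of the random--index CLT together with the uniformity required for this conditioning step: one must verify Anscombe's uniform-continuity-in-probability condition for the i.i.d.\ sums and control the remainder in the correct in-probability sense, so that the Gumbel term $aN\zeta_N$, the Gaussian term $\sigma\sqrt{N\ln N}\,\eta_N$, and the error genuinely separate at their three distinct scales $N$, $\sqrt{N\ln N}$, and $\sqrt{N/\ln N}$.
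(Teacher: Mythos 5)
Your proposal is correct and follows essentially the same route as the paper: the decomposition $\tilde T_N = a\nu_N + \sigma S_{\nu_N}$, the Erd\"{o}s--R\'enyi Gumbel limit $\nu_N = N\ln N + N\zeta_N$, a conditional (random-index) CLT for the cycle durations given $\nu_N$, and the expansion of $\sqrt{\nu_N}$ that produces the $\mathcal{O}\left(\sqrt{N/\ln N}\right)$ remainder. The paper's own proof is only a sketch---it invokes the ``CLT conditionally (for fixed $\nu_N$)'' without recording the independence of $\nu_N$ from the durations, Anscombe's condition, or the Taylor step---so your write-up simply supplies the details the paper leaves implicit.
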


\begin{proof}
 There are two points of view on the limit theorem for $\frac{T_N}{L^2}$. If we know only $T_N$ (but not $\nu_N$)
\begin{align*}\text{Then,}&& 
\frac{T_N}{L^2} =\tilde{T}_N=a \nu_N+(\tilde{T}_N-a \nu_N) 
=a N \ln{N}+a \zeta_N N+\mathcal{O}(N) \end{align*}

that is, \begin{equation}\frac{T_N}{L^2 N \ln N}\xrightarrow[\text{$ N\to \infty $}]{\text{law}} a=2 \end{equation}


 But, if we know the member of $\nu_N$ in our experiment we can use CLT conditionally (for fixed $\nu_N$)
\begin{align*} \sum_{j=1}^{\nu_N}\tilde{\tau}_{j} = \nu_N a + \sigma \sqrt{\nu_N} \eta_{\nu_N} && \text{$\eta_{\nu_N}$ is asymptotically $\mathcal{N}(0,1)$} 
\end{align*}

finally,  $$\tilde{T}_{N}=a N \ln N + a N\zeta_N + \sigma \sqrt{N \ln N } \eta_N + 
+ \mathcal{O}\left(\sqrt{\frac{N}{\ln N }} \right) $$

One can find additional terms of the asymptotic expansion of the distribution of $\frac{T_N}{L^2}=\tilde{T}_{N}$, as $N \to 0$. 
\end{proof}

Let us consider a different version of \ref{th2.8}. Let $Sp(N)$ be the spider graph with N infinite legs and $D(N,L)=\cup_{i=1}^{N}\{0\leq x_i \leq L \} \subset Sp(N)$. Let $\tilde{\tau}_{N, L}$ be the total time for Brownian motion on $Sp(N)$ to cover subset $D(N, L)$. It is now easy to see (comparing with \ref{th2.8})
\begin{align*}
\tilde{\tau}_{N,L}&=\sum_{j=1}^{\nu_N}\tilde{\tau}_j\\
\text{where}\qquad E_0 e^{- \lambda \tilde{\tau}_j}&=\frac{1}{\cosh \sqrt{2\lambda}L} e^{-\sqrt{2\lambda} L}
\end{align*}
(transition from 0 to $L$ for reflected Brownian motion on $[0, \infty)$ and from $L$ to 0). 

\begin{theorem}
Assume that $L=1$, then conditionally for known $\nu_N$ \begin{align*} \frac{\tilde{\tau}_{N,1}}{\nu^2_N}\xrightarrow{\text{law}} \eta_{\frac{1}{2}} \qquad \text{as} \qquad N \to \infty\end{align*}
where the limiting r.v $\eta_{\frac{1}{2}}$ has the positive stable distribution with Laplace transform $exp(-\sqrt{2\lambda})$. Since, $\frac{\nu_N}{(N \ln N)} \xrightarrow{\text{law}} 1$ as $N \to \infty$, we have $\frac{\tilde{\tau}_{N,1}}{(N \ln N)^2}\xrightarrow{\text{law}} \eta_{\frac{1}{2}}$.
\end{theorem}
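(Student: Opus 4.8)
The plan is to recognise $\tilde{\tau}_{N,1}=\sum_{j=1}^{\nu_N}\tilde{\tau}_j$ as a random sum of i.i.d.\ heavy-tailed excursion times lying in the domain of attraction of the one-sided $\tfrac12$-stable law, and then to invoke a stable limit theorem with the random (coupon-collector) number of summands $\nu_N$. First I would read off the small-$\lambda$ behaviour of the single-cycle transform: with $L=1$,
\begin{align*}
E_0 e^{-\lambda \tilde{\tau}_j}=\frac{e^{-\sqrt{2\lambda}}}{\cosh\sqrt{2\lambda}}=1-\sqrt{2\lambda}+O(\lambda),\qquad \lambda\to0^{+}.
\end{align*}
The leading non-analytic term $-\sqrt{2\lambda}=-\sqrt{2}\,\lambda^{1/2}$ shows that $\tilde{\tau}_j$ has infinite mean and sits in the domain of attraction of the positive stable law of index $\alpha=\tfrac12$; by a Tauberian argument its tail satisfies $P(\tilde{\tau}_j>s)\sim \sqrt{2}/\!\left(\Gamma(\tfrac12)\sqrt{s}\right)$. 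This is exactly the regime in which the correct normalisation of a sum of $n$ summands is $n^{1/\alpha}=n^{2}$, which explains the $\nu_N^{2}$ in the statement.

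Next I would condition on $\{\nu_N=n\}$ and analyse $\frac{1}{n^{2}}\sum_{j=1}^{n}\tilde{\tau}_j$. The leg visited in each cycle, which alone drives the coupon-collector count $\nu_N$, is chosen at the origin with probability $1/N$ independently of the excursion, and by the rotational symmetry of the spider the excursion law is identical on every leg; hence $\nu_N$ is independent of the sequence $(\tilde{\tau}_j)$. The conditional Laplace transform therefore factorises, and using Step 1,
\begin{align*}
E_0\!\left[e^{-\frac{\lambda}{n^{2}}\sum_{j=1}^{n}\tilde{\tau}_j}\right]=\left(1-\frac{\sqrt{2\lambda}}{n}+o\!\left(\tfrac1n\right)\right)^{n}\xrightarrow[n\to\infty]{}e^{-\sqrt{2\lambda}},
\end{align*}
which is precisely the Laplace transform of $\eta_{1/2}$. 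Since $\nu_N\to\infty$ as $N\to\infty$, the continuity theorem for Laplace transforms delivers the conditional convergence $\tilde{\tau}_{N,1}/\nu_N^{2}\xrightarrow{\text{law}}\eta_{1/2}$.

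Finally I would upgrade the normalisation from $\nu_N^{2}$ to $(N\ln N)^{2}$ by writing
\begin{align*}
\frac{\tilde{\tau}_{N,1}}{(N\ln N)^{2}}=\frac{\tilde{\tau}_{N,1}}{\nu_N^{2}}\cdot\left(\frac{\nu_N}{N\ln N}\right)^{2}.
\end{align*}
The Erd\H{o}s--R\'enyi law \eqref{nuN} gives $\nu_N/(N\ln N)\to1$ in probability, so the second factor tends to $1$, while the first tends in law to $\eta_{1/2}$; a Slutsky argument (valid since the limit of the second factor is a constant) then yields $\tilde{\tau}_{N,1}/(N\ln N)^{2}\xrightarrow{\text{law}}\eta_{1/2}$.

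The main obstacle is the random-index passage: one must justify interchanging the limit $N\to\infty$ with the conditioning on the random and diverging number of summands $\nu_N$. The clean route exploits the independence of $\nu_N$ and $(\tilde{\tau}_j)$ established above, so that the unconditional transform is the average $E_0[e^{-\lambda\tilde{\tau}_{N,1}/\nu_N^{2}}]=\sum_{n}P(\nu_N=n)\,\bigl(E_0 e^{-(\lambda/n^{2})\tilde{\tau}_j}\bigr)^{n}$; since each summand is bounded by $1$ and converges to $e^{-\sqrt{2\lambda}}$ while the mass of $\nu_N$ escapes to infinity, dominated convergence (an Anscombe-type stable limit theorem) gives the claim unconditionally. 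After this the replacement of $\nu_N^{2}$ by $(N\ln N)^{2}$ through \eqref{nuN} is routine, and the only other points needing genuine care are the transform/tail asymptotics of Step 1 and the uniformity underlying the Anscombe reduction.
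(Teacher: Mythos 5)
Your proof is correct, and it is worth noting that the paper itself supplies no argument at all for this theorem --- its proof reads, in full, ``the proof is standard'' --- so your write-up is exactly the standard stable-limit argument the authors are invoking, with the details filled in. All the genuinely load-bearing points are handled properly: the expansion $E_0 e^{-\lambda\tilde{\tau}_j}=e^{-\sqrt{2\lambda}}/\cosh\sqrt{2\lambda}=1-\sqrt{2\lambda}+O(\lambda)$ is right (in fact the error is $O(\lambda^{3/2})$, since the $\lambda$-terms of $e^{-\sqrt{2\lambda}}$ and $1/\cosh\sqrt{2\lambda}$ cancel); the independence of the coupon-collector count $\nu_N$ from the cycle durations $(\tilde{\tau}_j)$, via the symmetry of the spider making the duration law identical conditionally on which leg is covered, is the one subtlety that truly needs to be stated, and you state it; and the Slutsky step using $\nu_N/(N\ln N)\to 1$ (which follows from the Erd\H{o}s--R\'enyi law since $\zeta_N/\ln N\to 0$ in probability) is sound. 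One simplification you could make: the Anscombe-type uniformity you worry about at the end is automatic here, because $\nu_N\geq N$ deterministically (covering $N$ legs requires at least $N$ cycles), so $\bigl|E_0 e^{-\lambda\tilde{\tau}_{N,1}/\nu_N^2}-e^{-\sqrt{2\lambda}}\bigr|\leq\sup_{n\geq N}\bigl|\bigl(E_0e^{-(\lambda/n^2)\tilde{\tau}_1}\bigr)^n-e^{-\sqrt{2\lambda}}\bigr|\to 0$ with no dominated-convergence or tightness argument needed; likewise the Tauberian tail estimate in your first step is motivational only, since the Laplace-transform computation already carries the whole proof.
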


\begin{proof}
the proof is standard.
\end{proof}

\subsection{Arcsine law and its generalization on the $N$- legged spider graph}
\begin{theorem}\textbf{(P.Levy~\cite{levy1940certains})} Let, b(t), $t \in [0,u]$ be the 1-D Brownian motion and $T=\int_0^{u}\mathbb{I}_{[0, \infty)}(b(t))dt$ is the total time on the positive half axis is given by 
\begin{align}
P\{ t < T < t+dt /b(0)=0\} &= \frac{dt}{\pi \sqrt{t(u-t)}}\label{pl1}\\
P\{T< t /b(0)=0\} &= \frac{2}{\pi} \arcsin\sqrt{\frac{t}{u}}\label{pl2}
\end{align}
\end{theorem}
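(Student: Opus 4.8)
The plan is to follow the Feynman--Kac route, which meshes naturally with the second-order ODE and boundary-matching computations used throughout this paper (we are now on the full line, i.e.\ the case $N=2$). Introduce the occupation functional $T_u=\int_0^u \mathbb{I}_{[0,\infty)}(b(s))\,ds$ and study its Laplace transform in space and time simultaneously. Set $w(u,x)=E_x\bigl[e^{-\beta T_u}\bigr]$; by the Feynman--Kac formula with potential $\beta\,\mathbb{I}_{[0,\infty)}$ this solves
\begin{align*}
\partial_u w=\tfrac12\,\partial_x^2 w-\beta\,\mathbb{I}_{[0,\infty)}(x)\,w,\qquad w(0,x)=1.
\end{align*}
Taking the Laplace transform $\hat w(\alpha,x)=\int_0^\infty e^{-\alpha u}w(u,x)\,du$ in the time variable converts this into the inhomogeneous ODE $\tfrac12\hat w''=\bigl(\alpha+\beta\,\mathbb{I}_{[0,\infty)}\bigr)\hat w-1$, which has constant coefficients on each half-line, so it can be solved by exactly the kind of elementary computation employed above for $\tau_L$ and $\tau_0(L_1,\dots,L_N)$.

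On $x<0$ the equation reads $\tfrac12\hat w''=\alpha\hat w-1$ and on $x>0$ it reads $\tfrac12\hat w''=(\alpha+\beta)\hat w-1$. First I would write the bounded solution on each side as a constant particular solution plus one decaying exponential, namely $\hat w=\tfrac1\alpha+A\,e^{\sqrt{2\alpha}\,x}$ for $x<0$ and $\hat w=\tfrac1{\alpha+\beta}+B\,e^{-\sqrt{2(\alpha+\beta)}\,x}$ for $x>0$. Matching the value and the first derivative at $x=0$ (the diffusion value function is $C^1$ across the point where the potential jumps) pins down $A$ and $B$, and evaluating at the starting point $x=0$ produces the clean factorized double transform
\begin{align*}
\int_0^\infty e^{-\alpha u}\,E_0\bigl[e^{-\beta T_u}\bigr]\,du=\frac{1}{\sqrt{\alpha(\alpha+\beta)}}=\frac{1}{\sqrt{\alpha+\beta}}\cdot\frac{1}{\sqrt{\alpha}}.
\end{align*}

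The final step is to invert this transform. Writing each factor as a Laplace transform of a one-sided kernel,
\begin{align*}
\frac{1}{\sqrt{\alpha+\beta}}\cdot\frac{1}{\sqrt{\alpha}}=\left(\int_0^\infty \frac{e^{-(\alpha+\beta)t}}{\sqrt{\pi t}}\,dt\right)\left(\int_0^\infty \frac{e^{-\alpha s}}{\sqrt{\pi s}}\,ds\right),
\end{align*}
and substituting $u=t+s$ (so $s=u-t$ is the complementary, negative-side time) collapses the product into
\begin{align*}
\int_0^\infty e^{-\alpha u}\left(\int_0^u e^{-\beta t}\,\frac{dt}{\pi\sqrt{t(u-t)}}\right)du.
\end{align*}
Comparing with the left-hand side and using injectivity of the double Laplace transform, I would read off $E_0[e^{-\beta T_u}]=\int_0^u e^{-\beta t}\,\frac{dt}{\pi\sqrt{t(u-t)}}$, which identifies the density of $T_u$ as $\frac{1}{\pi\sqrt{t(u-t)}}$ on $(0,u)$; the normalization is the beta-integral check $\int_0^u \frac{dt}{\pi\sqrt{t(u-t)}}=1$. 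This is \eqref{pl1}. Integrating the density and substituting $t=u\sin^2\varphi$ then gives $\int_0^t\frac{ds}{\pi\sqrt{s(u-s)}}=\frac2\pi\arcsin\sqrt{t/u}$, establishing \eqref{pl2}.

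I expect the main obstacle to be the rigorous justification of the Feynman--Kac representation for the discontinuous indicator potential together with the matching at $x=0$ (the precise point where the potential jumps), and above all the clean inversion of the double Laplace transform: one must argue that the factorization $\alpha^{-1/2}(\alpha+\beta)^{-1/2}$ corresponds term-by-term to the arcsine density through the convolution identity above, rather than merely agreeing in a few moments. An alternative, more combinatorial route would start from the discrete arcsine (Chung--Feller) law for the number of positive steps of a simple random walk and pass to the limit via Donsker's invariance principle; this avoids the singular potential but requires the discrete arcsine identity and a weak-convergence argument, which lie somewhat outside the Laplace-transform toolkit developed in this paper.
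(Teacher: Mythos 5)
Your proposal is correct, and its first half coincides with the paper's own argument: both apply the Kac--Feynman formula with potential $\beta\,\mathbb{I}_{[0,\infty)}$, take the Laplace transform in time, solve the resulting constant-coefficient ODE separately on $x<0$ and $x>0$ with bounded solutions matched in value and first derivative at $x=0$, and arrive at the double transform $\int_0^\infty e^{-\alpha u}E_0\bigl[e^{-\beta T_u}\bigr]du=\frac{1}{\sqrt{\alpha(\alpha+\beta)}}$ (your matching computation checks out, and your worry about the discontinuous potential is not a real obstacle: Kac--Feynman holds for bounded measurable potentials, and the resolvent is automatically $C^1$ across the jump). The two arguments part ways at the inversion step. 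The paper rescales by self-similarity, expands $\frac{1}{\alpha}\cdot\frac{1}{\sqrt{1+z}}$ in powers of $z=\beta/\alpha$, reads off $\frac{1}{\sqrt{1+z}}$ as the generating function of the moments $m_n=E\theta^n$ of the normalized occupation time $\theta=T/u$, and then only asserts (``one can now prove that'') that these moments identify the arcsine law; completing that route requires observing that a distribution on $[0,1]$ is determined by its moments and that the arcsine moments are $m_n=\binom{2n}{n}4^{-n}$. You instead invert the double transform explicitly: factoring $\alpha^{-1/2}(\alpha+\beta)^{-1/2}$ as a product of two elementary one-sided Laplace transforms and convolving (the substitution $u=t+s$ plus Fubini, legitimate by positivity) yields $E_0\bigl[e^{-\beta T_u}\bigr]=\int_0^u e^{-\beta t}\frac{dt}{\pi\sqrt{t(u-t)}}$, whence \eqref{pl1} by injectivity of the Laplace transform and \eqref{pl2} by the substitution $t=u\sin^2\varphi$. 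For this theorem your route is the more complete one, since it produces the density directly and fills the step the paper leaves as a claim. What the paper's moment formulation buys is that it scales to the generalization proved immediately afterwards: on $Sp(N)$ with $N_0$ marked legs the double transform equals $\frac{1}{\alpha}\cdot\frac{1}{\sqrt{1+z}}\cdot\frac{N_0+(N-N_0)\sqrt{1+z}}{(N-N_0)+N_0\sqrt{1+z}}$, which is no longer a clean product of shifted powers of $\alpha$, so your convolution inversion does not apply verbatim there, while the moment-generating-function language carries over unchanged.
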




\begin{proof}
Let us outline the proof of Arcsine law using the Kac-Feynman formula following~(\cite{it1965diffusion}). Consider the 1-D Brownian motion and the random variable $\mathcal{F}(t) =\int_{0}^{t}\mathbb{I}_{[0,\infty)}(b(s)) ds$, that is, the time when the process $b(s)$ spends on positive half-axis $[0, \infty]$. Due to Kac-Feynmann formula, \begin{align*} E_x e^{-\beta \mathcal{F}(t)} dt= u(t,x) \qquad t\geq 0 \qquad x \in \mathbb{R} \end{align*}
is the solution of the equation 
\begin{align}
	\frac{\partial u}{\partial t} &= \frac{1}{2}\mathcal{L}u + \beta  \mathbb{I}_{[0, \infty)}u \qquad \text{where $\mathcal{L}=\frac{1}{2}\frac{\partial^2}{\partial x^{2}}$}\label{23}\\
	\text{with } u(0,x) &= 1 
\label{24}
\end{align}

The corresponding Laplace transform $u_{\beta, \alpha}(x)=\int_{0}^{\infty} e^{-\alpha t} u(t,x)dt$ is the solution of 
\begin{align*}
\frac{1}{2}u^{''}- (\alpha+\beta)u =-1 && x>0\\
\frac{1}{2}u^{''}-\alpha u =-1 && x<0
\end{align*}
Solving these equations and using continuity conditions at $x=0$, we will get (see~\cite{it1965diffusion}), $u_{\alpha, \beta}(0)= \frac{1}{\sqrt{\alpha(\alpha+\beta)}}$.



Let 
$\int_{0}^{t} \mathbb {I}_{[0, \infty)}(b(s)) ds 
= t \theta $ where $\theta= \frac{1}{t}\int_{0}^{t} \mathbb{I}(b(s))ds$,
then by Kac-Feynman's formula, for $\alpha >0$ 
\begin{align*}
u(\alpha, x=0, \beta) &
=\frac{1}{\alpha} \sum_{n=0}^{\infty}(-1)^n m_n (\frac{\beta}{\alpha})^n & \text{ where $m_n=E \theta^n$ and $\frac{\beta}{\alpha}=z$, $n>0$}\\
u(0,z)&=\frac{1}{\sqrt{1+z}}, &\text{ $|z|<1$ generating function of moments}
\end{align*}

One can now prove that, $P\{ \theta <s \} =\frac{2}{\pi} arc \sin{\sqrt{s}}$. In particular, $E\theta_1=\frac{1}{2}$, $E(\tau^{2}_1)=\frac{3}{8}$\\
\end{proof}

Let us find similar formulas for the random variable $\tau_{N,N_0}= \sum_{i=1}^{N_0} \int_{0}^{t} \mathbb {I}_{[0, \infty)}^{i}(b(s)) ds $, i.e., the time Brownian motion $b(s)$ spends on the first $N_0$ legs, $leg_1, \dots ,leg_{N_0}$ of $Sp(N)$.\\
\begin{theorem}
Consider the spider graph with potential $\beta  \mathbb{I}_{[0, \infty)}$ on $N_0$ legs and, no potentials on the $N-N_0$ legs, then the random variable $\tau_{N, N_0}$ follows the generalized arc sine distribution with the generating function of moments $\phi_{\tau_{N, N_0}}=\frac{1}{\sqrt{1+z}} \frac{N_0 + (N-N_0) \sqrt{1+z}}{(N-N_0)+N_0\sqrt{1+z}}$  \end{theorem}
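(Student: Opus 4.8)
The plan is to generalize the Kac--Feynman computation used for the classical arcsine law (the previous theorem) to the spider graph, exploiting the fact that the process behaves like an independent one-dimensional Brownian motion on each leg, coupled only through the Kirchhoff conditions at the origin. Following the earlier proof, I would introduce the occupation functional $\mathcal{F}(t)=\sum_{i=1}^{N_0}\int_0^t \mathbb{I}_{[0,\infty)}^{i}(b(s))\,ds$ and write $u(t,x)=E_x e^{-\beta \mathcal{F}(t)}$. By Kac--Feynman this solves the parabolic equation $\partial_t u = \tfrac12 \Delta u + \beta\,\mathbb{I}_{[0,\infty)}u$ on $Sp(N)$, but since the potential $\beta\,\mathbb{I}_{[0,\infty)}$ acts as the constant $\beta$ on the $N_0$ ``charged'' legs and as $0$ on the remaining $N-N_0$ legs, the time-Laplace transform $u_{\alpha,\beta}(x)=\int_0^\infty e^{-\alpha t}u(t,x)\,dt$ satisfies, on each leg, a constant-coefficient ODE.

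Concretely, I would set up the resolvent equation leg by leg: on each of the $N_0$ charged legs,
\begin{align*}
\tfrac12 u'' - (\alpha+\beta)u = -1, && x_i>0, \quad i=1,\dots,N_0,
\end{align*}
and on each of the $N-N_0$ free legs,
\begin{align*}
\tfrac12 u'' - \alpha u = -1, && x_j>0, \quad j=N_0+1,\dots,N.
\end{align*}
The bounded solutions are a particular constant plus a decaying exponential: on a charged leg $u=\frac{1}{\alpha+\beta}+A_i e^{-\sqrt{2(\alpha+\beta)}\,x_i}$ and on a free leg $u=\frac{1}{\alpha}+B_j e^{-\sqrt{2\alpha}\,x_j}$. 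Continuity at the origin forces a common value $u(0)$, which determines each $A_i$ and $B_j$ in terms of $u(0)$, and then the single remaining unknown $u(0)$ is fixed by the Kirchhoff derivative condition $\sum_i u_i'(0)+\sum_j u_j'(0)=0$. This linear equation gives $u_{\alpha,\beta}(0)$ as an explicit rational expression in $\sqrt{2\alpha}$ and $\sqrt{2(\alpha+\beta)}$; specializing $N_0=N$ must reproduce the classical value $\frac{1}{\sqrt{\alpha(\alpha+\beta)}}$ as a consistency check.

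To recover the stated generating function of moments, I would perform the same scaling reduction as in the arcsine proof: writing $\mathcal{F}(t)=t\theta$ and $z=\beta/\alpha$, the quantity $\alpha\, u_{\alpha,\beta}(0)$ depends only on the dimensionless ratio through $\sqrt{1+z}=\sqrt{(\alpha+\beta)/\alpha}$, and expanding $\alpha\,u_{\alpha,\beta}(0)=\sum_{n\ge0}(-1)^n m_n z^n$ identifies the moment generating function $\phi_{\tau_{N,N_0}}(z)$. Carrying the algebra through, the $\alpha$-dependence cancels and the Kirchhoff balance between $N_0$ terms carrying $\sqrt{2(\alpha+\beta)}$ and $N-N_0$ terms carrying $\sqrt{2\alpha}$ produces exactly the claimed factor $\frac{1}{\sqrt{1+z}}\cdot\frac{N_0+(N-N_0)\sqrt{1+z}}{(N-N_0)+N_0\sqrt{1+z}}$. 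The main obstacle is purely bookkeeping: keeping the two families of legs straight in the Kirchhoff sum and checking that every dimensional factor of $\sqrt{2}$ and $\sqrt{\alpha}$ cancels so that the final answer is genuinely a function of $z$ alone. I would verify this by confirming the two degenerate cases $N_0=N$ (classical arcsine, $\phi=1/\sqrt{1+z}$) and $N_0=0$ (trivial, $\phi=1$), which pin down the normalization and catch sign errors.
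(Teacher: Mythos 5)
Your proposal follows exactly the paper's route: the same time-Laplace transform of the Kac--Feynman representation, the same leg-by-leg resolvent ODEs $\tfrac12 u''-(\alpha+\beta)u=-1$ on the $N_0$ charged legs and $\tfrac12 u''-\alpha u=-1$ on the free ones, bounded exponential-plus-constant solutions, continuity plus the Kirchhoff derivative balance to pin down $u(0)$, and the identification $\phi(z)=\alpha\,u_{\alpha,\beta}(0)$ with $z=\beta/\alpha$. The core argument is correct and essentially identical to the paper's proof.

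One caution: your stated consistency checks are misidentified, and since you plan to use them to ``pin down the normalization and catch sign errors,'' they would flag your correct algebra as wrong. When $N_0=N$ the claimed formula gives
\begin{equation*}
\phi=\frac{1}{\sqrt{1+z}}\cdot\frac{N}{N\sqrt{1+z}}=\frac{1}{1+z},
\end{equation*}
not $1/\sqrt{1+z}$; this is as it should be, because with every leg charged the occupation fraction is $\theta\equiv 1$, so $\phi(z)=\sum_n(-1)^n z^n=\frac{1}{1+z}$, and correspondingly $u(0)=\frac{1}{\alpha+\beta}$, not $\frac{1}{\sqrt{\alpha(\alpha+\beta)}}$. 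The classical arcsine value $1/\sqrt{1+z}$ (equivalently $u(0)=1/\sqrt{\alpha(\alpha+\beta)}$) is instead the case $N=2$, $N_0=1$, i.e.\ the real line with potential on one half-axis. The check $N_0=0$ giving $\phi=1$ is fine.
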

\begin{proof}
Like in the case of Real line $\mathbb{R}$, the Laplace transform for the distribution of $\tau_{N,N_0}$ 
 $$\int_{0}^{\infty} e^{-\alpha t} E_x e^{-\beta \tau_{(N, N_0)}} dt = u_{\alpha, \beta, N, N_0}(x)$$
is the solution of the equation, 
\begin{align*}
u_i^{''}- 2(\alpha+\beta)u_i =-2 && i=1,2, \dots N_0\\
u_{j}^{''}-2\alpha u_j =-2 && j=N_0+1, \dots, N
\end{align*}
Then,
\begin{align*}
u_i =c_i e^{-\sqrt{2(\alpha+\beta)}x_i} +\frac{1}{\alpha+\beta}, && i=1,2, \dots N_0\\
u_j= c_j e^{-\sqrt{2\alpha}x_j} +\frac{1}{2\alpha} && j=N_0+1, \dots, N 
\end{align*} 
plus Kirchhoff's gluing condition at $x=0$. After elementary calculations we get 
\begin{align*} u(0) = 
\frac{1}{\alpha}\frac{1}{\sqrt{1+z}} \frac{N_0+(N-N_0)\sqrt{1+z}}{(N-N_0)+N_0\sqrt{1+z}} && \text{where $\frac{\beta}{\alpha}=z$}
\end{align*}

Then, \begin{align*}\phi_{\tau_{N,N_0}} (z) &= \sum_{n=0}^{\infty}(-)^n z^n m_{n,N,N_0}=\frac{1}{\sqrt{1+z}} \frac{N_0 + (N-N_0) \sqrt{1+z}}{(N-N_0)+N_0\sqrt{1+z}}\\
&=1-\frac{N_0}{N} z + \frac{N_0(N+N_0)}{N^2}\frac{z^2}{2} \dots 
\end{align*}
where $m_{k,N,N_0}=E\theta_{N,N_0}^{n}$, and $\theta_{N,N_0}=\frac{\tau_{N,N_0}}{t}$. This implies, $m_0=1$, the first moment is $m_1= \frac{N_0}{N}$ and the second moment is $m_2=\frac{N_0(N+N_0)}{N^2}$
\end{proof}


\section{The spectral theory of Laplacian on $Sp(N)$ }
In this section, we will develop the direct and inverse Fourier transform on the infinite spider graph. To avoid long formulas we will consider $Sp_3$. As usual, we start from $Sp(3, L)$ with Dirichlet BC at the point $x_i=L$, $i=1,2,3$. Our first goal is to find the most symmetric orthonormal basis of the eigenfunctions for the problem  
\begin{align*} -\frac{1}{2} \frac{d^2 \psi}{d x_i{2}} \psi_{L,i}=\lambda \psi= k^2, \qquad \psi_{i}(L)=0 \end{align*}  
plus standard Kirchhoff's gluing condition and continuity conditions at the origin. There are two different cases: \\








 If $\lambda = k^2 > 0$ and $\psi_{\lambda}(0)=0$ then for $n \geq 1$ there are two eigenfunctions with eigenvalues 
 \begin{equation*}\lambda_n= k_n^2  \Rightarrow k_n= \frac{n \pi}{L} \end{equation*} 
Due to multiplicity 2 of $\lambda_n$, the selection of eigenfunction is not unique. We select the following version: 
\begin{align*}
\psi_{n, 1}(x)= \begin{cases}
0, & x_1 \in  [0,L] \\
\frac{\sin k_{n}x_2}{\sqrt{L}},& x_{2} \in[0, L] \\ 
\frac{-\sin k_{n} x_3}{\sqrt{L}},&  x_{3} \in[0, L] 
\end{cases}
\end{align*}

\begin{align*}
\psi_{n, 2}(x)= \begin{cases}
-\frac{2\sin k_{n}x_1}{\sqrt{3L}},& x_{2} \in[0, L]  \\
\frac{\sin k_{n}x_2}{\sqrt{3L}},& x_{2} \in[0, L] \\ 
\frac{\sin k_{n} x_3}{\sqrt{3L}},&  x_{3} \in[0, L] 
\end{cases}
\end{align*}
as easy to see 

\begin{align*} \int_{Sp(3,L)} \psi_{n,1}^{2} dx = \int_{Sp(3,L)} \psi_{n,2}^{2} dx=1, \qquad \int_{Sp(3,L)} \psi_{n,1} \centerdot \psi_{n,2} dx =0 \end{align*} 
and $\psi_{n,i}$ for $i=1,2$ satisfy condition at $x=0$. Dirichlet BC at $x_i=L$, $i=1,2,3$ is due to the fact, that $k_n$ is the root of the equation $\sin {k_n}L=0$. The third eigenfunction has the form


\begin{align*}
\tilde{\psi}_n(x) =\frac{\cos \tilde{k}_n x_{i}}{\sqrt{3 \frac{L}{2}}}, && i=1,2,3 && \tilde{k}_n = \frac{\pi(n+\frac{1}{2})}{L}, 
\end{align*}
(Neumann condition at $x=0$ implies the Kirchhoff's condition). $\tilde{k_n}$ is very close to $k_n$, this is why we use the notation $\tilde{\psi}_n $ with the same index n, note that $||\tilde{\psi}_n ||=1$ for $n \geq 1$. System of functions $\left(\psi_{n,1}, \psi_{n,2}, \tilde{\psi}_n\right)$ for $n=1,2, \dots$ form orthogonal basis in $L^2(Sp(3,L))$

Consider the compactly supported smooth function $f(x)$ on the $Sp_3$, whose support does not contain the neighborhood of the origin. We will use, in some cases, notations $f(x_i)$, $f_i(x_i)$, $i=1,2,3$ for restrictions of $f(\centerdot)$ on the legs $l_i$, $i=1,2,3$. Assume now that $L$ is sufficiently large and as a result, $\text{support} f(\centerdot) \subset Sp(3, L)$. Let us introduce Fourier transforms on each leg $i$
\begin{align}
\hat{f}_{i, S}(k) = \int_{l_i} f(x_i) \sin kx_i dx_i \label{sin}\\
\hat{f}_{i, C}(k) = \int_{l_i}f(x_i) \cos kx_i dx_ i \label{cos} && i=1,2,3
\end{align}
(indices S, C mean sine, cosine). We can express the Fourier coefficients $a_{n,1}$, $a_{n,2}$, $\tilde{a}_n$ in terms of $\hat{f}_{i,S}$, $\hat{f}_{i,C}$:
\begin{align*}
a_{n,1}&= \int_{0}^{L} f(x_i) \psi_{n,1}(x_i) dx_i= \int_{0}^{\infty} \frac{f(x_2)\sin{k_n}x_2}{\sqrt{L}} dx_2 -  \int_{0}^{\infty} \frac{f(x_3)\sin{k_n}x_3}{\sqrt{L}} dx_3\\
&= \frac{1}{\sqrt{L}} \left[\hat{f}_{2,S}(\frac{\pi n}{L}) - \hat{f}_{3,S}(\frac{\pi n}{L}) \right] \end{align*}

Similarly, 
\begin{align*}
&&a_{n,2}&=\frac{1}{\sqrt{3L}} \left[ -2\hat{f}_{1,S}(\frac{\pi n}{L})+ \hat{f}_{2,S}(\frac{\pi n}{L}) + \hat{f}_{3,S}(\frac{\pi n}{L})\right]\\
\text{and}&&\tilde{a}_{n}&=\frac{1}{\sqrt{\frac{3L}{2}}}\left[ \hat{f}_{1,C} \left(\frac{\pi(n+\frac{1}{2})}{L}\right) + \hat{f}_{2,C}\left(\frac{\pi(n+\frac{1}{2})}{L}\right)+ \hat{f}_{3,C}\left(\frac{\pi(n+\frac{1}{2})}{L}\right) \right]
\end{align*}
The function $f_i(x_i)$, $i=1,2,3$ can be presented by Fourier series 
\begin{align*}
f(x) &= \sum_{n=1}^{\infty} a_{n,1} \psi_{n,1}(x) +  \sum_{n=1}^{\infty} a_{n,2} \psi_{n,2}(x) +  \sum_{n=1}^{\infty} \tilde{a}_{n} \tilde{\psi}_{n}(x) \\
&=\sum_{\rom{1}}+\sum_{\rom{2}}+\sum_{\rom{3}}
\end{align*}
Then, 
\begin{align*}
\sum_{\rom{1}}&=
\sum_{n=1}^{\infty}\left\{
\begin{array}{@{}ll@{}ll@{}}
0, && \text{along leg 1}\\
\frac{1}{\pi}\sin{\pi n x_2}, && \text{along leg 2}\\
-\frac{1}{\pi}\sin{\pi n x_3}, && \text{along leg 3}
\end{array}
\right\}
\left(\hat{f}_{2,S}(\frac{n \pi}{L})-\hat{f}_{3,S}(\frac{n \pi}{L})\right) \frac{\pi}{L}\\
&\overrightarrow{L\to \infty}\frac{1}{\pi}\int_{0}^{\infty}\left\{
\begin{array}{@{}ll@{}ll@{}}
0\\
\sin{k x_2}\\
-\sin{k x_3}
\end{array}
\right\}
\left(\hat{f}_{2,S}(k)-\hat{f}_{3,S}(k)\right) dk
\end{align*}

Similarly, 
\begin{align*}
\sum_{\rom{2}}
\overrightarrow{L\to \infty}\frac{1}{\pi\sqrt{3}}\int_{0}^{\infty}\left\{
\begin{array}{@{}ll@{}ll@{}}
-2 \sin{k x_1}\\
\sin{k x_2}\\
\sin{k x_3}
\end{array}
\right\}
\left(-2 \hat{f}_{1,S}(k)+\hat{f}_{2,S}(k)+\hat{f}_{3,S}(k)\right) dk
\end{align*}

\begin{align*}
\sum_{\rom{3}}
\overrightarrow{L\to \infty}\frac{1}{\pi}\sqrt{\frac{2}{3}}\int_{0}^{\infty}\left\{
\begin{array}{@{}ll@{}ll@{}}
\cos{k x_1}\\
\cos{k x_2}\\
\cos{k x_3}
\end{array}
\right\}
\left(\hat{f}_{1,C}(k)+\hat{f}_{2,C}(k)+\hat{f}_{3,C}(k)\right) dk
\end{align*}

\subsection{Perseval identity on $Sp(N,L)$}
 We start from the equality which expresses the completeness of the eigenfunctions in $L^2\left(sp(3, L)\right)$ 
\begin{align*}
\int_{Sp(3,L)} f^2(x) dx &= \left(\int_{0}^{\infty} f_{1}^{2}(x_1) dx_1 + \int_{0}^{\infty}f_{2}^{2}(x_2) dx_2 + \int_{0}^{\infty}f_{3}^{2}(x_3) dx_3 \right)\\
&= \sum_{n=1}^{\infty} (a_{n,1}^{2}+a_{n,2}^{2}+\tilde{a}_{n}^{2})
\end{align*}
Then, 
\begin{align*}
\sum_{n=1}^{\infty} a_{n,1}^2 &= \sum_{n=1}^{\infty}\frac{1}{L} \left(\hat{f}_{2,S}-\hat{f}_{3,S}\right)^{2} (\frac{\pi n}{L})\rightarrow \int_{0}^{\infty} (\hat{f}_{2,S}-\hat{f}_{3,S})^2 (k) dk
\end{align*}

similarly, 
$$\sum_{n=1}^{\infty} a_{n,2}^2 \rightarrow \frac{1}{3\pi}\int_{0}^{\infty} (-2\hat{f}_{1,S}+\hat{f}_{2,S}+\hat{f}_{3,S})^2 (k) dk $$
$$\sum_{n=1}^{\infty} \tilde{a}_{n}^2 \rightarrow \frac{2}{3\pi}\int_{0}^{\infty} (\hat{f}_{1,C}+\hat{f}_{2,C}+\hat{f}_{3,C})^2 (k) dk $$

Let us describe now from the beginning, the Fourier analysis of $Sp(3)$. If $f(x)$ is a good function then we can calculate on each leg, $l_1$, $l_2$, $l_3$ the cosine and sine Fourier transforms 
\begin{align*}
\hat{f}_{i, S}(k) = \int_{0}^{\infty} f(x_i) \sin kx_i dx_i \\
\hat{f}_{i, C}(k) = \int_{0}^{\infty}f(x_i) \cos kx_i dx_ i && i=1,2,3
\end{align*}
and from their three combinations 
\begin{align*}
\mathcal{F}_{1}(f,k) &= [\hat{f}_{2,S}(k)- \hat{f}_{3,S}(k)]\\
\mathcal{F}_{2}(f,k) &= [-2\hat{f}_{1,S}(k) + \hat{f}_{2,S}(k)+\hat{f}_{3,S}(k)]\\
\mathcal{F}_{3}(f,k) &= [\hat{f}_{1,C}(k) + \hat{f}_{2,C}(k)+\hat{f}_{3,C}(k)]
\end{align*}

 These combinations are the direct Fourier transform of $f$ on $Sp(3)$. Using $\mathcal{F}_{i}(f,k)$, $i=1,2,3$ and the fact that on each leg, $leg_i$, $i=1,2,3$, the standard Fourier transform ($f \to \hat{f}$) is the isometry, we can reconstruct $f(x)$: 
\begin{align*}
f(x)&=\left\{
\begin{array}{@{}ll@{}ll@{}}
f_1(x_1)\\
f_2(x_2)\\
f_3(x_3)
\end{array}
\right\}\\
&= \frac{1}{\pi}\int_{0}^{\infty}\left\{
\begin{array}{@{}ll@{}ll@{}}
0\\
\sin{k x_2}\\
-\sin{k x_3}
\end{array}
\right\}
\mathcal{F}_1(f,k) dk+ \frac{1}{\pi\sqrt{3}}\int_{0}^{\infty}\left\{
\begin{array}{@{}ll@{}ll@{}}
-2 \sin{k x_1}\\
\sin{k x_2}\\
\sin{k x_3}
\end{array}
\right\}
\mathcal{F}_2(f,k) dk \\
&\quad+\frac{1}{\pi} \sqrt{\frac{2}{3}}\int_{0}^{\infty}\left\{
\begin{array}{@{}ll@{}ll@{}}
\cos{k x_1}\\
\cos{k x_2}\\
\cos{k x_3}
\end{array}
\right\}
\mathcal{F}_3(f,k) dk
\end{align*}



\bibliographystyle{amsplain}
\bibliography{references_dbpaper1}











\end{document}